\newtheorem{thm}{Theorem}[section]
\newtheorem{lemma}{Lemma}[section]
\newtheorem{assumption}{Assumption}
\newtheorem{remark}{Remark}
\begin{document}   

\title{Unbiased estimators for the Heston model with stochastic interest rates}

\date{}

\author{Chao Zheng \thanks{Email: chao.zheng12@gmail.com.}}
\author{Jiangtao Pan \thanks{Email: zufe170112600119@163.com}}
\affil{School of Data Sciences, Zhejiang University of Finance and Economics, Hangzhou, China}

\maketitle

\begin{abstract}
We combine the unbiased estimators in Rhee and Glynn (Operations Research: 63(5), 1026-1043, 2015) and the Heston model with stochastic interest rates. Specifically, we first develop a semi-exact log-Euler scheme for the Heston model with stochastic interest rates. Then, under mild assumptions, we show that the convergence rate in the $L^2$ norm is $O(h)$, where $h$ is the step size. The result applies to a large class of models, such as the Heston-Hull-While model, the Heston-CIR model and the Heston-Black-Karasinski model.  Numerical experiments support our theoretical convergence rate.
\end{abstract}

\textbf{Keywords:} Heston model, stochastic interest rates, unbiased estimators,  convergence rate, Heston-Hull-While model, Heston-CIR model

\textbf{AMS subject classifications (2000):} 60H35, 65C30, 91G60

\section{Introduction}
The classical Heston model (Heston \cite{He}) is one of the fundamental models in finance, and it has been widely applied in various financial markets, such as the equity, fixed income and foreign exchange markets, due to its tractability in modelling the term structure of implied volatility. However, in this model, the interest rate is constant, and this assumption is often not appropriate for long-maturity options, because the long-term behaviour of the interest rate is typically far from constant. This phenomenon has been empirically investigated by Bakshi, Cao and Chen \cite{BCC}. A natural extension of the Heston model is to incorporate a stochastic interest rate, which is usually referred to as the Heston model with stochastic interest rates. There are several contributions in this direction, such as Grzelak and Oosterlee \cite{GO}, Van Haastrecht and Pelsser \cite{VHP} and references therein. 

Under the Heston model with stochastic interest rates, the price of an option can be written as
\[
\mathbb{E}\left(e^{-\int_{0}^{T}r_{t}\mathrm{d}t}P(S)\right)
\]
where $S$ is the solution to the Heston model with stochastic interest rates and $r_{t}$ is the underlying interest rate. Here, $P$ denotes the payoff functional of an option. We are interested in calculating this expectation, as for the majority of options, there are no closed-form formulas. A typical approach is to use a Monte Carlo method associated with a time-discrete scheme on $S$ for an approximate value.  Recently, Rhee and Glynn \cite{RG} proposed several unbiased Monte Carlo estimators based on a randomization idea for a stochastic differential equation, which are unbiased versions of multilevel Monte Carlo estimators by Giles \cite{Gi}. These unbiased estimators have a clear advantage over the standard Monte Carlo estimator, as the latter is typically biased when $S$ has to be approximated through a time-discrete scheme. To combine Rhee and Glynn's estimators and the Heston model with stochastic interest rates, it is essential to develop a numerical scheme with a sufficiently high convergence rate in the $L^2$ norm. However, standard theorems, such as those in Kloeden and Platen \cite{KP}, require that the drift and diffusion coefficients satisfy the global Lipschitz and linear growth assumptions, which are not satisfied by the Heston model with stochastic interest rates. Research on developing time-discrete schemes for the Heston model with stochastic interest rates is scarce, which is challenging. Cozma, Mariapragassam and Reisinger \cite{CMR} proposed a different log-Euler scheme for the stochastic-local volatility model with stochastic rates and they demonstrated strong convergence without providing a rate. This is the only reference we are aware of concerning the convergence of Monte Carlo algorithms for the Heston model with stochastic interest rates. 

In this article, we develop a semi-exact log-Euler scheme for the Heston model with stochastic interest rates, where the driven Brownian motion for interest rate models is independent of the driven Brownian motions for the Heston component. The scheme is an extension of those in Mickel and Neuenkirch \cite{MN} and Zheng \cite{CZ} \cite{CZ1} for the classical Heston model. Under mild assumptions on interest rate models, we show that the underlying scheme converges with order one in the $L^2$ norm. The types of options we consider include those with bounded and Lipschitz continuous payoffs and the digital option with a discontinuous payoff. When a numerical scheme has a convergence rate higher than $1/2$, it is convenient to combine it into Rhee and Glynn's unbiased estimators. Our result applies to a large class of models, including the Heston-Hull-While model, the Heston-CIR model and the Heston-Black-Karasinski model, among which the Heston-Hull-White model and the Heston-CIR model are particularly attractive in practical applications. Furhermore, we extend the result to the Heston model with stochastic interest rates, where the driven Brownian motion for interest rate models is correlated with that for the Heston component 

There are two advantages of the scheme we develop. One is that the convergence rate is free of Feller's index (except for the digital option), i.e., the convergence rate is valid for the full range of parameters in the Heston model with stochastic interest rates. The other is that the convergence rate is higher than the usual convergence rate ($1/2$ in the $L^2$ norm) of the standard Euler scheme under standard assumptions, where the usual rate may not be sufficiently high for the unbiased estimation.

The remainder of this article is organized as follows. In section 2, we review the Heston model with stochastic interest rates and develop a log-Euler scheme for it. Section 3 reviews the unbiased estimators from Rhee and Glynn \cite{RG}. In section 4, we derive the relevant convergence rate under several mild assumptions. Section 5 extends the result to a more general model. Section 6 illustrates numerical results to support our theoretical analysis. Finally, we conclude in section 7.

\section{Heston model with stochastic interest rates}\label{Hestonstr} \label{section2}
Let $(\Omega,\mathcal{F}, (\mathcal{F}_{t})_{t\geq 0}, \mathbb{P})$ be a filtered probability space satisfying the usual assumptions. The Heston model with stochastic interest rates is of the form
\begin{align*}
\mathrm{d}S_{t}&=r_{t}S_{t}\mathrm{d}t+\sqrt{V_{t}}S_{t}(\rho\mathrm{d}W_{t}^{1}+\sqrt{1-\rho^{2}}\mathrm{d}W_{t}^{2})\\
\mathrm{d}V_{t}&=k(\theta-V_{t})\mathrm{d}t+\sigma\sqrt{V_{t}}\mathrm{d}W_{t}^{1},
\end{align*}
where $(W^1_{t})_{t\geq 0}$ and $(W^2_{t})_{t\geq 0}$ are two independent $\mathcal{F}_{t}$-adapted Brownian motions and the parameters $k,\theta,\sigma>0$ and $\rho\in [-1,1]$. The initial values $S_0, V_0>0$. Here, $(r_{t})_{t\geq 0}$ is a stochastic interest rate. If we replace $r_t$ in the model with a constant, then it becomes the classical Heston model. The classical interest rate models and their generalizations can be found in Brigo and Mercurio \cite{BM}. Among them, a large class of interest rate models can be written as
\[
\mathrm{d}r_{t}=\mu(t,r_{t})\mathrm{d}t+\phi(t,r_{t})\mathrm{d}W_{t}^{3},
\]
where $\mu, \phi: [0,T]\times \mathbb{R}\rightarrow \mathbb{R}$ are continuous functions and $(W_{t}^{3})_{t\geq 0}$ is a $\mathcal{F}_{t}$-adapted Brownian motion. We assume that $(W_{t}^{3})_{t\geq 0}$ is independent of $(W_{t}^{1})_{t\geq 0}$ and $(W_{t}^{2})_{t\geq 0}$.  Furthermore, we assume that there is a unique solution to the equation of $(r_{t})_{t\geq 0}$ above. 

Let $X_{t}=\ln(S_{t})$. By using It$\hat{o}$'s lemma, we have
\[
\mathrm{d}X_{t}=\left(r_{t}-\frac{1}{2}V_{t}\right)\mathrm{d}t+\sqrt{V_{t}}\left(\rho\mathrm{d}W_{t}^{1}+\sqrt{1-\rho^{2}}\mathrm{d}W_{t}^{2}\right).
\]
Then substituting the equation of $V_{t}$ into the equation above, we obtain
\[
\mathrm{d}X_{t}=\left(\left(r_{t}-\frac{k\rho\theta}{\sigma}\right)+\left(\frac{k\rho}{\sigma}-\frac{1}{2}\right)V_{t}\right)\mathrm{d}t+\frac{\rho}{\sigma}\mathrm{d}V_{t}+\sqrt{1-\rho^{2}}\sqrt{V_{t}}\mathrm{d}W_{t}^{2}.
\]
Since $(V_{t})_{t\geq 0}$ is independent of $(W^2_{t})_{t\geq 0}$, the stochastic integral $\int_{0}^{T}\sqrt{V_{t}}\mathrm{d}W_{t}^{2}$ is normally distributed with mean $0$ and variance $\int_{0}^{T}V_{t}\mathrm{d}t$. Therefore, the solution at any finite time horizon $T>0$ can be written as 
\begin{align}
X_{T}&=X_{0}+\left[\int_{0}^{T}r_{t}\mathrm{d}t+\left(\frac{\rho k}{\sigma}-\frac{1}{2}\right)\int_{0}^{T}V_{t}\mathrm{d}t+\frac{\rho}{\sigma}\left(V_{T}-V_{0}-k\theta T\right)\right.\nonumber\\
&\quad\left.+\sqrt{1-\rho^{2}}\sqrt{\int_{0}^{T}V_{t}\mathrm{d}t}N\right]\label{XT}
\end{align}
where $N$ is a standard normal random variable, that is independent of $(V_{t})_{t\in [0,
T]}$. Note that $N$ is also independent of $(r_{t})_{t\in [0,T]}$, because the driving Brownian motion $(W^3_{t})_{t\in [0,T]}$ of $(r_{t})_{t\in [0,T]}$ is independent of $(W^2_{t})_{t\in [0,T]}$ and $(V^2_{t})_{t\in [0,T]}$. There are several integrals in equation (\ref{XT}) to be approximated.

It is known that $V_{t}$ follows a scaled noncentral chi-squared distribution given $V_{u}$ for any $u\in [0,t)$, i.e.,
\[
V_{t}\overset{\mathrm{d}}{=}\frac{\sigma^{2}(1-\mathrm{e}^{-k(t-u)})}{4k}\chi_{d}^{2}\left(\frac{4k\mathrm{e}^{-k(t-u)}}{\sigma^{2}(1-\mathrm{e}^{-k(t-u)})}V_{u}\right),
\]
where $\chi_{d}^{2}(\lambda)$ denotes a noncentral chi-squared random variable with degrees of freedom $d=\frac{4k \theta}{\sigma^{2}}>0$  and noncentrality parameter $\lambda>0$ (see Glasserman \cite{G}). Hence, $V_{t}$ can be sampled exactly. Let $(\hat{r}_{ih})_{i=1,..,T/h}$ be an approximate path of $(r_{t})_{t\in [0,T]}$. It is convenient to approximate
\[
\int_{0}^{T}r_{t}\mathrm{d}t\approx \sum_{i=0}^{T/h-1}\hat{r}_{ih}h, \quad \int_{0}^{T}V_{t}\mathrm{d}t\approx \sum_{i=0}^{T/h-1}V_{ih}h,
\]
by using the Euler scheme based on step size $h$. We denote by $\hat{X}_{T}^{h}$ the approximated solution of $X_{T}$. Let $\hat{S}_{T}^h:=e^{\hat{X}_{T}^h}$, and then $\hat{S}_{T}^h$ is an approximation of $S_{T}$.

\section{Unbiased estimators for SDEs} \label{Sect3}
In this section, we review the unbiased estimators introduced in Rhee and Glynn \cite{RG}. The prices of many options can be expressed as 
\[
\mathbb{E}(Y):=\mathbb{E}\left(e^{-\int_{0}^{T}r_{t}\mathrm{d}t}P(S_{T})\right)
\]
where $P$ is the payoff function and $Y\in L^2$ (i.e., $\mathbb{E}(Y^2)<\infty$). To estimate this expectation, Rhee and Glynn \cite{RG} proposed an estimator
\[
Z=\sum_{n=0}^{\mathcal{N}}\frac{\Delta_{n}}{\mathbb{P}(\mathcal{N}\geq n)}
\]
where $\Delta_{n}=Y_{n}-Y_{n-1}$ and $Y_{n}$, $n\in \mathbb{N}$, approximates $Y$ with step size $T/2^n$ and $Y_{-1}=0$. In this article, we let
\[
Y_{n}=e^{-\sum_{i=0}^{T/h-1}\hat{r}_{ih}h}P(\hat{S}_{T}^h), \quad h=T/2^n.
\]
Here, $\mathcal{N}$ is a nonnegative integer-valued random variable that is independent of $Y_{n}$. This estimator is usually referred to as the coupled sum estimator. In addition, Rhee and Glynn \cite{RG} proposed the single term estimator
\[
\tilde{Z}=\frac{\Delta_{\mathcal{N}}}{p_\mathcal{N}},
\]
where $p_n=\mathbb{P}(\mathcal{N}=n)$. 

For the coupled sum estimator, suppose that $Y_{n}$ converges to $Y$ in the $L^2$ norm as $n\rightarrow \infty$. Theorem 1 of Rhee and Glynn \cite{RG} showed that if 
\begin{equation}
\sum_{n=0}^{\infty}\frac{\mathbb{E}\left[(Y_{n-1}-Y)^2\right]}{\mathbb{P}(\mathcal{N}\geq n)}<\infty \label{var}
\end{equation}
then $Z$ is an unbiased estimator of $\mathbb{E}(Y)$ (i.e., $\mathbb{E}(Z)=\mathbb{E}(Y)$) with a finite variance. Furthermore, the average computational time of $Z$ is proportional to 
\begin{equation}
\sum_{n=0}^{\infty}2^n\mathbb{P}(\mathcal{N}\geq n). \label{ct}
\end{equation}
Therefore, if $\mathbb{E}\left[(Y_{n}-Y)^2\right]=O(2^{-2np})$ with $p>1/2$ (here, $p$ is the convergence rate in the $L^2$ norm), we can easily construct a distribution for $\mathcal{N}$ such that $\mathbb{P}(\mathcal{N}\geq n)=O(2^{-n(p+1/2)})$ to ensure that (\ref{var}) and (\ref{ct}) are finite. The optimal distribution of $\mathcal{N}$ can be calculated by minimizing the product of the variance and the average computational time of $Z$; see Rhee and Glynn \cite{RG}, Cui, et.al \cite{Cui} and Zheng, Pan and Wang \cite{ZPW}. 

For the single term estimator, if 
\[
\sum_{n=0}^{\infty}\frac{\mathbb{E}\left[(Y_{n-1}-Y_n)^2\right]}{p_n}<\infty 
\]
then $\tilde{Z}$ is an unbiased estimator with a finite variance. Similarily, if $\mathbb{E}\left[(Y_{n}-Y)^2\right]=O(2^{-2np})$, $p>1/2$, it is convenient to set $p_n=O(2^{-n(p+1/2)})$. The optimal distribution of $\mathcal{N}$ can be found at Rhee and Glynn \cite{RG} and Zheng, Pan and Wang \cite{ZPW}. 

Therefore, it is important to investigate the convergence rate of $\mathbb{E}\left[(Y_{n}-Y)^2\right]$.

\section{Convergence analysis} \label{sect2}
Recall that the interest rate $(r_{t})_{t\in [0,T]}$ follows the stochastic differential equation
\[
\mathrm{d}r_{t}=\mu(t,r_{t})\mathrm{d}t+\phi(t,r_{t})\mathrm{d}W_{t}^{3},
\]
where $(W_{t}^{3})_{t\in [0,T]}$ is independent of $(W_{t}^{1})_{t\in [0,T]}$ and $(W_{t}^{2})_{t\in [0,T]}$. Let $c$ and $c_{q}$ be constants regardless of their values, where $c_{q}$ relies on $q\geq 1$. Our analysis throughout this article is based on the following assumption:
\begin{assumption} \label{as1}
For any $q\geq 1$, it holds that
\[
\sup_{t\in [0,T]}\mathbb{E}\left[|\mu(t,r_{t})|^{q}\right]<\infty, \quad \sup_{t\in [0,T]}\mathbb{E}\left[|\phi(t,r_{t})|^{q}\right]<\infty,
\]
and the approximate interest rate $(\hat{r}_{ih})_{i=1,..,T/h}$ satisfies
\[
\max_{i=1,..,T/h}\mathbb{E}\left[|\hat{r}_{ih}-r_{ih}|^{q}\right]<c_{q}h^{q}.
\]
\end{assumption}
\begin{remark}
A typical time-discrete scheme that may satisfy Assumption \ref{as1} is the Milstein scheme. Under some standard assumptions on model coefficients (e.g., Lipschitz continuity, linear growth), the convergence rate in the $L^2$ norm is one. 
\end{remark}

\begin{lemma}\label{lemma1}
Under Assumption \ref{as1}, we obtain
\[
\mathbb{E}\left[\left|\int_{0}^{T}r_{t}\mathrm{d}t-\sum_{i=0}^{T/h-1} \hat{r}_{ih}h\right|^{q}\right]=O(h^{q}), \quad \forall q\geq 2.
\]
\end{lemma}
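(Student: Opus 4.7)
\medskip

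\textbf{Plan.} I would split the total error into a \emph{quadrature error} plus a \emph{scheme error}:
\begin{align*}
\int_{0}^{T} r_{t}\,\mathrm{d}t - \sum_{i=0}^{T/h-1}\hat{r}_{ih}h
= \underbrace{\int_{0}^{T} r_{t}\,\mathrm{d}t - \sum_{i=0}^{T/h-1} r_{ih}h}_{=: E_{1}}
\;+\; \underbrace{\sum_{i=0}^{T/h-1}(r_{ih}-\hat{r}_{ih})h}_{=: E_{2}}.
\end{align*}
By Minkowski's inequality and the last bound in Assumption \ref{as1},
$\|E_{2}\|_{L^{q}} \le \sum_{i} h\,\|r_{ih}-\hat{r}_{ih}\|_{L^{q}} \le T\cdot c_{q}^{1/q}\, h$,
so $\mathbb{E}|E_{2}|^{q}=O(h^{q})$. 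The work is therefore concentrated in $E_{1}$.

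The naive route via $\mathbb{E}|r_{t}-r_{ih}|^{q}=O(h^{q/2})$ (from BDG applied to the diffusion part) loses half a power of $h$ and gives only $O(h^{q/2})$ after summing, which is not sharp enough. The key trick I would use is Fubini, both deterministic and stochastic: writing $r_{t}-r_{ih}=\int_{ih}^{t}\mu(s,r_{s})\,\mathrm{d}s+\int_{ih}^{t}\phi(s,r_{s})\,\mathrm{d}W_{s}^{3}$ and integrating $t$ first yields
\begin{align*}
E_{1} = \sum_{i=0}^{T/h-1}\int_{ih}^{(i+1)h}(r_{t}-r_{ih})\,\mathrm{d}t
= \int_{0}^{T} g(s)\,\mu(s,r_{s})\,\mathrm{d}s + \int_{0}^{T} g(s)\,\phi(s,r_{s})\,\mathrm{d}W_{s}^{3},
\end{align*}
where $g(s):=(\lfloor s/h\rfloor+1)h-s$ satisfies $0\le g(s)\le h$. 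This is the crucial structural observation: the error is a single integral with a small weight, not a sum of increments.

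With this representation, the estimate is routine. For the drift term, Hölder's inequality and the first moment bound in Assumption \ref{as1} give
$\mathbb{E}\bigl|\int_{0}^{T}g(s)\mu(s,r_{s})\,\mathrm{d}s\bigr|^{q}\le T^{q-1}h^{q}\int_{0}^{T}\mathbb{E}|\mu(s,r_{s})|^{q}\,\mathrm{d}s=O(h^{q})$.
For the martingale term, applying the Burkholder--Davis--Gundy inequality followed by Jensen gives
\begin{align*}
\mathbb{E}\Bigl|\int_{0}^{T} g(s)\phi(s,r_{s})\,\mathrm{d}W_{s}^{3}\Bigr|^{q}
\le c_{q}\,\mathbb{E}\Bigl(\int_{0}^{T} g(s)^{2}\phi(s,r_{s})^{2}\,\mathrm{d}s\Bigr)^{q/2}
\le c_{q}\,T^{q/2-1}h^{q}\int_{0}^{T}\mathbb{E}|\phi(s,r_{s})|^{q}\,\mathrm{d}s,
\end{align*}
which is $O(h^{q})$ by the second moment bound in Assumption \ref{as1}. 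Combining the bounds for $E_{1}$ and $E_{2}$ via the triangle inequality in $L^{q}$ finishes the proof. The main obstacle to watch for is the temptation to bound increments locally and sum; the gain of a half power of $h$ really comes from using (stochastic) Fubini to turn the error into a single integral weighted by the $O(h)$ sawtooth $g$.
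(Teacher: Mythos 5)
Your proposal is correct and follows essentially the same route as the paper: the same split into quadrature error plus scheme error, the same (deterministic and stochastic) Fubini trick to rewrite the quadrature error as a single integral against the sawtooth weight $g(s)=\eta(s)+h-s\le h$ (the paper's $h\bigl(1+\frac{\eta(u)-u}{h}\bigr)$), and the same BDG estimate for the martingale part. The only cosmetic differences are using Minkowski rather than Jensen for the scheme-error term and Jensen rather than Cauchy--Schwarz after BDG.
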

\begin{proof}
We see that
\begin{align}
&\mathbb{E}\left[\left|\int_{0}^{T}r_{t}\mathrm{d}t-\sum_{i=0}^{T/h-1} \hat{r}_{ih}h\right|^{q}\right]\nonumber\\
&\leq c_{q}\mathbb{E}\left[\left|\int_{0}^{T}r_{t}\mathrm{d}t-\sum_{i=0}^{T/h-1} r_{ih}h\right|^{q}\right]+c_{q}\mathbb{E}\left[\left|\sum_{i=0}^{T/h-1} (\hat{r}_{ih}-r_{ih})h\right|^{q}\right].\label{eq2.1}
\end{align}
Let $\eta(t):=\max\{lh:lh\leq t, l=0,1,2,...\}$. For the first term of (\ref{eq2.1}), we have
\begin{align}
&\mathbb{E}\left[\left|\int_{0}^{T}r_{t}\mathrm{d}t-\sum_{i=0}^{T/h-1} r_{ih}h\right|^{q}\right]\nonumber\\
&=\mathbb{E}\left[\left|\int_{0}^{T}r_{\eta(t)}\mathrm{d}t-\int_{0}^{T}r_{t}\mathrm{d}t\right|^{q}\right]\nonumber\\
&\leq c_{q}\mathbb{E}\left[\left|\int_{0}^{T}\left(\int_{\eta(t)}^{t}\mu(u,r_{u})\mathrm{d}u\right)\mathrm{d}t\right|^{q}\right]+c_{q}\mathbb{E}\left[\left|\int_{0}^{T}\left(\int_{\eta(t)}^{t}\phi(u,r_{u})\mathrm{d}W_{u}^{3}\right)\mathrm{d}t\right|^{q}\right].\label{Eulerind1}
\end{align}
An application of the Fubini theorem yields 
\begin{align*}
\int_{0}^{T}\left(\int_{\eta(t)}^{t}\mu(u,r_{u})\mathrm{d}u\right)\mathrm{d}t&=\int_{0}^{T}\left(\int_{u}^{\eta(u)+h}\mu(u,r_{u})\mathrm{d}t\right)\mathrm{d}u\\
&=h\int_{0}^{T}\left(1+\frac{\eta(u)-u}{h}\right)\mu(u,r_{u})\mathrm{d}u.
\end{align*}
Hence, it follows that
\begin{equation}
\mathbb{E}\left[\left|\int_{0}^{T}\left(\int_{\eta(t)}^{t}\mu(u,r_{u})\mathrm{d}u\right)\mathrm{d}t\right|^{q}\right]\leq c_{q}h^{q}\int_{0}^{T}\mathbb{E}[|\mu(u,r_{u})|^{q}]\mathrm{d}u=O(h^{q}).\label{eq2.2}
\end{equation}
Furthermore, we obtain from the stochastic Fubini theorem (see Theorem 65, Protter \cite{Pr}), the Burkholder-Davies-Gundy inequality and the Cauchy-Schwarz inequality that
\begin{align}
&\mathbb{E}\left[\left|\int_{0}^{T}\left(\int_{\eta(t)}^{t}\phi(u,r_{u})\mathrm{d}W_{u}^{3}\right)\mathrm{d}t\right|^{q}\right]\nonumber\\
&= h^{q}\mathbb{E}\left[\left|\int_{0}^{T}\left(1+\frac{\eta(u)-u}{h}\right)\phi(u,r_{u})\mathrm{d}W_{u}^{3}\right|^{q}\right]\nonumber\\
&\leq c_{q}h^{q}\mathbb{E}\left[\left|\int_{0}^{T}\phi^2(u,r_{u})\mathrm{d}u\right|^{q/2}\right]\leq c_{q}h^{q}\int_{0}^{T}\mathbb{E}[|\phi(u,r_{u})|^{q}]\mathrm{d}u=O(h^{q}).\label{eq2.3}
\end{align}
Therefore, substituting (\ref{eq2.2}) and (\ref{eq2.3}) into (\ref{Eulerind1}), we show that (\ref{Eulerind1}) is $O(h^{q})$. For the second term of (\ref{eq2.1}), by Jensen's inequality, we have
\begin{equation}
\mathbb{E}\left[\left|\sum_{i=0}^{T/h-1} (\hat{r}_{ih}-r_{ih})h\right|^{q}\right]\leq c_q\sum_{i=0}^{T/h-1}\mathbb{E}\left[|\hat{r}_{ih}-r_{ih}|^{q}\right]h=O(h^{q}).\label{eq2.4}
\end{equation}
Thus, combining (\ref{Eulerind1}) and (\ref{eq2.4}) into (\ref{eq2.1}), the proof is complete. 
\end{proof}

\begin{lemma}\label{lemma2}
For the stochastic process $(V_{t})_{t\in [0,T]}$, we have
\[
\mathbb{E}\left[\left|\sqrt{\int_{0}^{T}V_{t}\mathrm{d}t}-\sqrt{\sum_{i=0}^{T/h-1} V_{ih}h}\right|^{q}\right]=O(h^{q}), \quad \forall q\geq 1.
\]
\end{lemma}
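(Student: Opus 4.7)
The plan is to use the identity $|\sqrt{a}-\sqrt{b}| = |a-b|/(\sqrt{a}+\sqrt{b})$ valid for $a,b\geq 0$ not both zero, which is strictly sharper than the $\frac{1}{2}$-H\"older bound $|\sqrt{a}-\sqrt{b}|\leq\sqrt{|a-b|}$: the latter alone would only produce the rate $O(h^{q/2})$, so obtaining the full order $O(h^q)$ hinges on controlling the denominator through negative moments of the integrated CIR process. Setting $I_1 := \int_0^T V_t\,\mathrm{d}t$ and $I_2 := \sum_{i=0}^{T/h-1} V_{ih}h$, I would first observe
\[
\bigl|\sqrt{I_1} - \sqrt{I_2}\bigr|^q = \frac{|I_1 - I_2|^q}{(\sqrt{I_1}+\sqrt{I_2})^q} \leq \frac{|I_1-I_2|^q}{I_1^{q/2}},
\]
and then apply the Cauchy--Schwarz inequality to split the expectation as
\[
\mathbb{E}\bigl|\sqrt{I_1}-\sqrt{I_2}\bigr|^q \leq \bigl(\mathbb{E}|I_1-I_2|^{2q}\bigr)^{1/2}\bigl(\mathbb{E}[I_1^{-q}]\bigr)^{1/2}.
\]

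The first factor I would show to be $O(h^q)$ by an argument parallel to Lemma \ref{lemma1}. Writing $I_1 - I_2 = \int_0^T (V_t - V_{\eta(t)})\,\mathrm{d}t$, using the It\^o expansion $V_t - V_{\eta(t)} = \int_{\eta(t)}^t k(\theta - V_u)\,\mathrm{d}u + \sigma\int_{\eta(t)}^t \sqrt{V_u}\,\mathrm{d}W_u^1$, and applying the Fubini and stochastic Fubini theorems as in the proof of Lemma \ref{lemma1}, the problem reduces to $L^{2q}$ estimates of $\int_0^T (\eta(u)+h-u)(\theta - V_u)\,\mathrm{d}u$ and $\sigma\int_0^T (\eta(u)+h-u)\sqrt{V_u}\,\mathrm{d}W_u^1$. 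Each is smaller by a factor of $h$ since $|\eta(u)+h-u| \leq h$, and combining this with the Burkholder--Davis--Gundy inequality and the standard bounds $\sup_{u\in[0,T]}\mathbb{E}[V_u^p]<\infty$ for the CIR process delivers $\mathbb{E}|I_1-I_2|^{2q} = O(h^{2q})$.

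The second factor requires the nontrivial input $\mathbb{E}[I_1^{-q}] < \infty$ for every $q>0$, which holds \emph{uniformly in the Feller index} $4k\theta/\sigma^2$ whenever $V_0>0$. This is a known property of the integrated CIR process and can be read off from the explicit Laplace transform $\mathbb{E}[e^{-\lambda I_1}]$, whose decay of order $\exp(-c\sqrt{\lambda})$ as $\lambda\to\infty$ forces the density of $I_1$ to vanish sufficiently fast near zero; I would cite the prior works of Mickel--Neuenkirch and Zheng for this estimate. The main obstacle is precisely this inverse-moment bound, since it is what separates the sharp rate $O(h^q)$ from the crude rate $O(h^{q/2})$ given by any H\"older-type argument, and it is the reason the lemma can be stated free of any Feller-type restriction; a deterministic lower bound such as $I_2\geq V_0 h$ would be too weak, costing exactly the factor $h^{q/2}$ one is trying to save.
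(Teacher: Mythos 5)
Your proposal is correct and follows essentially the same route as the paper's proof: the identity $|\sqrt{a}-\sqrt{b}|=|a-b|/(\sqrt{a}+\sqrt{b})$, discarding $\sqrt{I_2}$ in the denominator, a Cauchy--Schwarz split into $\bigl(\mathbb{E}|I_1-I_2|^{2q}\bigr)^{1/2}\bigl(\mathbb{E}[I_1^{-q}]\bigr)^{1/2}$, the Lemma~\ref{lemma1}-type bound for the first factor (the paper simply notes that the CIR coefficients satisfy Assumption~\ref{as1} and invokes Lemma~\ref{lemma1} directly), and the finiteness of all negative moments of $\int_0^T V_t\,\mathrm{d}t$ for the second (the paper cites Dufresne's Theorem~4.1(a) where you cite Mickel--Neuenkirch and Zheng). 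Your additional commentary on why the H\"older bound would lose a factor of $h^{q/2}$ is accurate but not needed for the argument.
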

\begin{proof}
The Cauchy-Schwarz inequality implies that
\begin{align}
&\mathbb{E}\left[\left|\sqrt{\int_{0}^{T}V_{t}\mathrm{d}t}-\sqrt{\sum_{i=0}^{T/h-1} V_{ih}h}\right|^{q}\right]\nonumber\\
&=\mathbb{E}\left[\left|\frac{\int_{0}^{T}V_{t}\mathrm{d}t-\sum_{i=0}^{T/h-1} V_{ih}h}{\sqrt{\int_{0}^{T}V_{t}\mathrm{d}t}+\sqrt{\sum_{i=0}^{T/h-1} V_{ih}h}}\right|^{q}\right]\nonumber\\
&\leq \mathbb{E}\left[\left|\frac{\int_{0}^{T}V_{t}\mathrm{d}t-\sum_{i=0}^{T/h-1} V_{ih}h}{\sqrt{\int_{0}^{T}V_{t}\mathrm{d}t}}\right|^{q}\right]\nonumber\\
&\leq \sqrt{\mathbb{E}\left[\left|\int_{0}^{T}V_{t}\mathrm{d}t-\sum_{i=0}^{T/h-1} V_{ih}h\right|^{2q}\right]}\cdot \sqrt{\mathbb{E}\left[\left(\frac{1}{\int_{0}^{T}V_{t}\mathrm{d}t}\right)^{q}\right]}.\label{xt2}
\end{align}
From Theorem 4.1(a) in Dufresne \cite{Du}, we learn that 
\[
\mathbb{E}\left[\left(\frac{1}{\int_{0}^{T}V_{t}\mathrm{d}t}\right)^{q}\right]<\infty 
\]
for any $q\in \mathbb{R}$. Since the coefficients in the equation of $(V_{t})_{t\in [0,T]}$ satisfy Assumption \ref{as1}, we conclude from Lemma \ref{lemma1} that $\mathbb{E}\left[\left|\int_{0}^{T}V_{t}\mathrm{d}t-\sum_{i=0}^{T/h-1}V_{ih}h\right|^{2q}\right]=O(h^{2q})$. Therefore, the term of (\ref{xt2}) is $O(h^{q})$ and the proof is complete.
\end{proof}

\begin{thm} \label{thm2.1}
Under Assumption \ref{as1}, we have
\[
\mathbb{E}\left[\left|X_{T}-\hat{X}_{T}^{h}\right|^{q}\right]=O(h^{q}), \quad \forall q\geq 2.
\]
\end{thm}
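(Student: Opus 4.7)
The plan is to substitute the explicit formula (\ref{XT}) for $X_T$ and the analogous formula for $\hat X_T^h$, observe that the term $\frac{\rho}{\sigma}(V_T - V_0 - k\theta T)$ appears in both (since $V_T$ is sampled exactly from the noncentral chi-squared distribution) and therefore cancels, and then apply the $c_q$-inequality to reduce the estimate to three independent pieces:
\begin{align*}
A_h &:= \int_0^T r_t\,\mathrm{d}t - \sum_{i=0}^{T/h-1}\hat r_{ih}h,\\
B_h &:= \int_0^T V_t\,\mathrm{d}t - \sum_{i=0}^{T/h-1} V_{ih}h,\\
C_h &:= \sqrt{1-\rho^2}\,N\left(\sqrt{\int_0^T V_t\,\mathrm{d}t} - \sqrt{\sum_{i=0}^{T/h-1} V_{ih}h}\right).
\end{align*}

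Then I would dispose of $A_h$ directly by Lemma \ref{lemma1}, whose hypotheses are exactly Assumption \ref{as1}. For $B_h$, I would apply Lemma \ref{lemma1} again, this time to the CIR process $(V_t)$ with drift $\mu(t,v)=k(\theta-v)$ and diffusion $\phi(t,v)=\sigma\sqrt{v}$, and with the ``approximate'' path taken to be the exact values $V_{ih}$ themselves (so the term $\mathbb{E}[|\hat V_{ih}-V_{ih}|^q]$ in Assumption \ref{as1} vanishes trivially). The only nontrivial verification is that $\sup_{t\in[0,T]}\mathbb{E}[V_t^q]<\infty$ and $\sup_{t\in[0,T]}\mathbb{E}[V_t^{q/2}]<\infty$ for every $q\geq 1$, which is a classical fact about the CIR process and is unaffected by Feller's index.

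For $C_h$, the key observation is that $N$ is independent of $(V_t)_{t\in[0,T]}$, so by conditioning on the $\sigma$-algebra generated by $(V_t)$,
\[
\mathbb{E}\!\left[|C_h|^q\right]
= (1-\rho^2)^{q/2}\,\mathbb{E}\!\left[|N|^q\right]\,
\mathbb{E}\!\left[\left|\sqrt{\int_0^T V_t\,\mathrm{d}t} - \sqrt{\sum_{i=0}^{T/h-1} V_{ih}h}\right|^q\right],
\]
and the right-hand side is $O(h^q)$ by Lemma \ref{lemma2} together with the finiteness of all Gaussian moments. Collecting the three $O(h^q)$ estimates through the $c_q$-inequality completes the argument.

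The step I expect to require the most care is the $C_h$ piece: one must be careful to separate the randomness of $N$ from that of $(V_t)$ cleanly (using their independence, as highlighted right after (\ref{XT})), because without this decoupling one would need $L^{2q}$ control of the square-root difference, which is exactly what Lemma \ref{lemma2} provides but only via the Cauchy--Schwarz bound involving inverse moments of $\int_0^T V_t\,\mathrm{d}t$. The $A_h$ and $B_h$ pieces are essentially direct citations of Lemma \ref{lemma1}, modulo the (standard) moment bounds for the CIR process that are needed to verify Assumption \ref{as1} in the $B_h$ case.
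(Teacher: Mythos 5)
Your proposal is correct and follows essentially the same route as the paper: the paper's proof applies the $c_q$-inequality to obtain exactly your three terms $A_h$, $B_h$, $C_h$ (with the $V_T$ term cancelling and the factor $N$ absorbed into the constant), then cites Lemma \ref{lemma1} for the first two and Lemma \ref{lemma2} for the third. The details you spell out — applying Lemma \ref{lemma1} to the CIR coefficients for $B_h$ and decoupling $N$ from $(V_t)$ by independence for $C_h$ — are precisely what the paper's ``straightforward calculation'' leaves implicit.
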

\begin{proof}
Straightforward calculation shows that
\begin{align*}
&\mathbb{E}\left[\left|X_{T}-\hat{X}_{T}^{h}\right|^{q}\right]\\
&\leq c_{q}\mathbb{E}\left[\left|\int_{0}^{T}r_{t}\mathrm{d}t-\sum_{i=0}^{T/h-1} \hat{r}_{ih}h\right|^{q}\right]+c_{q}\mathbb{E}\left[\left|\int_{0}^{T}V_{t}\mathrm{d}t-\sum_{i=0}^{T/h-1} V_{ih}h\right|^{q}\right]\\
&\quad +c_{q}\mathbb{E}\left[\left|\sqrt{\int_{0}^{T}V_{t}\mathrm{d}t}-\sqrt{\sum_{i=0}^{T/h-1} V_{ih}h}\right|^{q}\right].
\end{align*}
and an application of Lemmas \ref{lemma1} and \ref{lemma2} completes the proof.
\end{proof}

Then we proceed to the convergence rate of the log-Euler scheme to approximate the price of an option. Different types of options may have different payoff functions, which can be continuous or discontinuous. 

\subsection{Analysis for continuous payoffs}
For continuous payoffs, we impose an assumption: 

\begin{assumption} \label{as2}
The payoff $P:[0,+\infty)\rightarrow \mathbb{R}$ is Lipschitz continuous and there exists a constant $C>0$, such that $P(U)=P(C)$ for all $U>C$.
\end{assumption}

Under Assumption \ref{as2}, it holds that
\begin{equation}
|P(U_{1})-P(U_{2})|\leq c|\ln U_{1} - \ln U_{2}| \label{eq2.5}
\end{equation}
for all $U_{1},U_{2}\in [0,+\infty)$, see Theorem 3.1 in Zheng \cite{CZ}. The payoff that satisfies Assumption \ref{as2} is bounded, which is typically suitable for a put-style option. The put-style option becomes worthless when the price of the underlying asset $S_{T}$ is sufficiently high.  For example, the standard European put option has payoff $P(S_{T}):=\max\{K-S_{T},0\}$ with $K>0$, which satisfies Assumption \ref{as2}.

\begin{thm}\label{thm1}
Suppose that Assumptions \ref{as1} and \ref{as2} are satisfied. Suppose that there exists $p>1$, such that $\mathbb{E}\left(e^{-2p\int_{0}^{T}r_{t}\mathrm{d}t}\right)< \infty$ and $\sup_h\mathbb{E}\left(e^{-2p\sum_{i=0}^{T/h-1}\hat{r}_{ih}h}\right)< \infty$. Then, we have
\[
\mathbb{E}\left[\left(e^{-\int_{0}^{T}r_{t}\mathrm{d}t}P(S_{T})\right)^2\right]<\infty
\]
and
\[
\mathbb{E}\left[\left(e^{-\int_{0}^{T}r_{t}\mathrm{d}t}P(S_{T})-e^{-\sum_{i=0}^{T/h-1}\hat{r}_{ih}h}P(\hat{S}_{T}^h)\right)^2\right]=O(h^2).
\]
\end{thm}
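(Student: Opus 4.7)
The plan is to first handle the finite-second-moment claim, then attack the convergence rate by splitting the difference $e^{-\int_0^T r_t\mathrm{d}t}P(S_T)-e^{-\sum\hat r_{ih}h}P(\hat S_T^h)$ into two natural pieces and treating each by Hölder's inequality.

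For the finite second moment, since $P$ is Lipschitz and constant beyond $C$ by Assumption \ref{as2}, it is bounded by some $M$. Hence $(e^{-\int_0^T r_t\mathrm{d}t}P(S_T))^2\le M^2 e^{-2\int_0^T r_t\mathrm{d}t}$, and applying Jensen to the convex function $x\mapsto x^p$ with $p>1$ (or equivalently Hölder with exponents $p,p/(p-1)$ against the constant $1$) yields $\mathbb{E}[e^{-2\int_0^T r_t\mathrm{d}t}]\le(\mathbb{E}[e^{-2p\int_0^T r_t\mathrm{d}t}])^{1/p}<\infty$, giving the first assertion.

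For the convergence rate, I would write
\[
e^{-\int_0^T r_t\mathrm{d}t}P(S_T)-e^{-\sum_{i}\hat r_{ih}h}P(\hat S_T^h)=e^{-\int_0^T r_t\mathrm{d}t}\bigl[P(S_T)-P(\hat S_T^h)\bigr]+\bigl[e^{-\int_0^T r_t\mathrm{d}t}-e^{-\sum_i \hat r_{ih}h}\bigr]P(\hat S_T^h),
\]
square, use $(a+b)^2\le 2a^2+2b^2$, and bound the two resulting expectations separately. For the first piece, I would use the Lipschitz-in-log bound (\ref{eq2.5}) to replace $|P(S_T)-P(\hat S_T^h)|$ by $c|X_T-\hat X_T^h|$, then apply Hölder with exponents $p$ and $p/(p-1)$ to separate the exponential factor from the scheme error:
\[
\mathbb{E}\!\left[e^{-2\int_0^T r_t\mathrm{d}t}|X_T-\hat X_T^h|^2\right]\le \bigl(\mathbb{E}[e^{-2p\int_0^T r_t\mathrm{d}t}]\bigr)^{1/p}\bigl(\mathbb{E}[|X_T-\hat X_T^h|^{2p/(p-1)}]\bigr)^{(p-1)/p}.
\]
Since $p>1$ forces $q:=2p/(p-1)>2$, Theorem \ref{thm2.1} gives $\mathbb{E}[|X_T-\hat X_T^h|^{q}]=O(h^{q})$, and raising to the power $(p-1)/p$ yields $O(h^2)$.

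For the second piece I would use $|e^{-x}-e^{-y}|\le (e^{-x}\vee e^{-y})|x-y|$ together with boundedness of $P$ to get
\[
\bigl(e^{-\int_0^T r_t\mathrm{d}t}-e^{-\sum_i\hat r_{ih}h}\bigr)^2 P(\hat S_T^h)^2\le M^2\bigl(e^{-2\int_0^T r_t\mathrm{d}t}+e^{-2\sum_i\hat r_{ih}h}\bigr)\Bigl|\int_0^T r_t\mathrm{d}t-\sum_i\hat r_{ih}h\Bigr|^2,
\]
and again apply Hölder with the same exponents, using $\sup_h\mathbb{E}[e^{-2p\sum_i\hat r_{ih}h}]<\infty$ together with $\mathbb{E}[e^{-2p\int r}]<\infty$ for the exponential factor and Lemma \ref{lemma1} (with $q=2p/(p-1)$) for the Riemann-sum error, again yielding $O(h^2)$. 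Combining the two estimates completes the proof. The main technical point, and the only place where care is needed, is choosing the Hölder exponents so that the exponential-moment assumption is consumed with the admissible power $2p$ while the residual exponent $2p/(p-1)$ on the scheme error stays above $2$ so that Theorem \ref{thm2.1} and Lemma \ref{lemma1} apply; everything else is standard manipulation.
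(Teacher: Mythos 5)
Your proposal is correct and follows essentially the same route as the paper: the same splitting of the difference into $e^{-\int_0^T r_t\mathrm{d}t}\,[P(S_T)-P(\hat S_T^h)]$ plus $P(\hat S_T^h)\,[e^{-\int_0^T r_t\mathrm{d}t}-e^{-\sum_i \hat r_{ih}h}]$, the same H\"older pairing of the exponential moment at power $2p$ against the scheme error at power $2p/(p-1)>2$ via inequality (\ref{eq2.5}), Theorem \ref{thm2.1} and Lemma \ref{lemma1}. The only cosmetic difference is that you bound the exponential difference by $(e^{-x}\vee e^{-y})|x-y|$ directly where the paper phrases the same estimate as a first-order Taylor expansion with an intermediate point $\varepsilon$; both are consumed by the identical assumption $\mathbb{E}(e^{-2p\int_0^T r_t\mathrm{d}t})+\sup_h\mathbb{E}(e^{-2p\sum_i\hat r_{ih}h})<\infty$.
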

\begin{proof}
For the first term, it follows from Jensen's inequality and the boundedness of $P(S_T)$ that
\[
\mathbb{E}\left[\left(e^{-\int_{0}^{T}r_{t}\mathrm{d}t}P(S_{T})\right)^2\right]<c\mathbb{E}\left(e^{-2\int_{0}^{T}r_{t}\mathrm{d}t}\right)<c\left[\mathbb{E}\left(e^{-2p\int_{0}^{T}r_{t}\mathrm{d}t}\right)\right]^{1/p}<\infty,
\]
where $p>1$. Then, we focus on the second term. The Taylor expansion, together with H$\ddot{o}$lder's inequality, gives
\begin{align*}
&\mathbb{E}\left[\left(e^{-\int_{0}^{T}r_{t}\mathrm{d}t}-e^{-\sum_{i=0}^{T/h-1}\hat{r}_{ih}h}\right)^2\right]\\
&=\mathbb{E}\left[e^{-2\varepsilon}\left(\int_{0}^{T}r_{t}\mathrm{d}t-\sum_{i=0}^{T/h-1}\hat{r}_{ih}h\right)^2\right]\\
&\leq \left[\mathbb{E}\left(\max(e^{-2p\int_{0}^{T}r_{t}\mathrm{d}t},e^{-2p\sum_{i=0}^{T/h-1}\hat{r}_{ih}h})\right)\right]^{1/p}\cdot \left[\mathbb{E}\left|\int_{0}^{T}r_{t}\mathrm{d}t-\sum_{i=0}^{T/h-1}\hat{r}_{ih}h\right|^{2q}\right]^{1/q}
\end{align*}
where $\frac{1}{p}+\frac{1}{q}=1$, and $\varepsilon$ is between $\int_{0}^{T}r_{t}\mathrm{d}t$ and $\sum_{i=0}^{T/h-1}\hat{r}_{ih}h$. Lemma \ref{lemma1} shows that
\[
\mathbb{E}\left[\left|\int_{0}^{T}r_{t}\mathrm{d}t-\sum_{i=0}^{T/h-1}\hat{r}_{ih}h\right|^{2q}\right]=O(h^{2q}).
\]
Note that
\[
\mathbb{E}\left(\max(e^{-2p\int_{0}^{T}r_{t}\mathrm{d}t},e^{-2p\sum_{i=0}^{T/h-1}\hat{r}_{ih}h})\right)\leq \mathbb{E}\left(e^{-2p\int_{0}^{T}r_{t}\mathrm{d}t}\right)+\sup_h\mathbb{E}\left(e^{-2p\sum_{i=0}^{T/h-1}\hat{r}_{ih}h}\right)<c,
\]
where $c$ is a constant independent of $h$. Consequently
\[
\mathbb{E}\left[\left(e^{-\int_{0}^{T}r_{t}\mathrm{d}t}-e^{-\sum_{i=0}^{T/h-1}\hat{r}_{ih}h}\right)^2\right]=O(h^2).
\]
Therefore, using the boundedness of $P$, H$\ddot{o}$lder's inequality, inequality (\ref{eq2.5}) and Theorem \ref{thm2.1}, we have
\begin{align*}
&\mathbb{E}\left[\left(e^{-\int_{0}^{T}r_{t}\mathrm{d}t}P(S_{T})-e^{-\sum_{i=0}^{T/h-1}\hat{r}_{ih}h}P(\hat{S}_{T}^h)\right)^2\right]\\
&=\mathbb{E}\left[\left(e^{-\int_{0}^{T}r_{t}\mathrm{d}t}\left(P(S_{T})-P(\hat{S}_{T}^h)\right)+P(\hat{S}_{T}^h)\left(e^{-\int_{0}^{T}r_{t}\mathrm{d}t}-e^{-\sum_{i=0}^{T/h-1}\hat{r}_{ih}h}\right)\right)^2\right]\\
&\leq 2\mathbb{E}\left[e^{-2\int_{0}^{T}r_{t}\mathrm{d}t}\left(P(S_{T})-P(\hat{S}_{T}^h)\right)^2\right]+2\mathbb{E}\left[P^2(\hat{S}_{T}^h)\left(e^{-\int_{0}^{T}r_{t}\mathrm{d}t}-e^{-\sum_{i=0}^{T/h-1}\hat{r}_{ih}h}\right)^2\right]\\
&\leq c\left[\mathbb{E}\left(e^{-2p\int_{0}^{T}r_{t}\mathrm{d}t}\right)\right]^{1/p}\cdot \left[\mathbb{E}\left|P(S_{T})-P(\hat{S}_{T}^h)\right|^{2q}\right]^{1/q}+c\mathbb{E}\left[\left(e^{-\int_{0}^{T}r_{t}\mathrm{d}t}-e^{-\sum_{i=0}^{T/h-1}\hat{r}_{ih}h}\right)^2\right]\\
&\leq c\left[\mathbb{E}\left|\ln(S_{T})-\ln(\hat{S}_{T}^h)\right|^{2q}\right]^{1/q}+c\mathbb{E}\left[\left(e^{-\int_{0}^{T}r_{t}\mathrm{d}t}-e^{-\sum_{i=0}^{T/h-1}\hat{r}_{ih}h}\right)^2\right]=O(h^2),
\end{align*}
which completes the proof.

\end{proof}

Theorem \ref{thm1} is based on Assumption \ref{as2} for bounded and Lipschitz continuous payoffs. Fortunately, for unbounded payoffs, such as the European call option with the payoff $P(S_{T}):=\max\{S_{T}-K,0\}$, $K>0$, our numerical experiment in the next section suggests that a similar conclusion to Theorem \ref{thm1} might still hold, but the rigorous convergence analysis is difficult. 

\subsection{Analysis for digital options}
The digital option may be one of the most popular options in finance that has a discontinuous payoff. Next, we extend the analysis to the digital option. The digital call option has the payoff 
\[
P(S_{T}):=1_{S_{T}>K}
\]
where $K>0$. Here, if $S_{T}>K$, then $P(S_T)=1$; otherwise, $P(S_T)=0$. Due to the discontinuity, a direct application of our time-discrete scheme in Section \ref{Hestonstr} might lead to a slow convergence rate. Therefore, we consider an approach using conditional expectations similar to that in Giles \cite{Gi2} to achieve a higher convergence rate. Specifically, conditional on $V:=(V_{t})_{t\in [0,T]}$ and $r:=(r_{t})_{t\in [0,T]}$, from equation (\ref{XT}), we have
\begin{align}
&\mathbb{E}\left(e^{-\int_{0}^{T}r_{t}\mathrm{d}t}1_{S_{T}>K}\right)\nonumber\\
&=\mathbb{E}\left[\mathbb{E}\left(e^{-\int_{0}^{T}r_{t}\mathrm{d}t}1_{S_{T}>K}|V,r\right)\right]\nonumber\\
&=\mathbb{E}\left[e^{-\int_{0}^{T}r_{t}\mathrm{d}t}\mathbb{P}(S_{T}>K|V,r)\right]\nonumber\\
&=\mathbb{E}\left[e^{-\int_{0}^{T}r_{t}\mathrm{d}t}\Phi\left(\frac{\ln S_{0}-\ln K+\int_{0}^{T}r_{t}\mathrm{d}t+\left(\frac{\rho k}{\sigma}-\frac{1}{2}\right)\int_{0}^{T}V_{t}\mathrm{d}t+\frac{\rho}{\sigma}\left(V_{T}-V_{0}-k\theta T\right)}{\sqrt{1-\rho^{2}}\sqrt{\int_{0}^{T}V_{t}\mathrm{d}t}}\right)\right]\nonumber\\
&=:\mathbb{E}\left[g\left(V_{T},\int_{0}^{T}V_{t}\mathrm{d}t,\int_{0}^{T}r_{t}\mathrm{d}t\right)\right],\label{digital}
\end{align}
where $\Phi$ is the cumulative distribution function of the standard normal distribution. Then, we convert the approximation of $\mathbb{E}\left(e^{-\int_{0}^{T}r_{t}\mathrm{d}t}1_{S_{T}>K}\right)$ to the approximation of $\mathbb{E}\left[g\left(V_{T},\int_{0}^{T}V_{t}\mathrm{d}t,\int_{0}^{T}r_{t}\mathrm{d}t\right)\right]$, where the integrals $\int_{0}^{T}V_{t}\mathrm{d}t$ and $\int_{0}^{T}r_{t}\mathrm{d}t$ can be approximated by $\sum_{i=0}^{T/h-1}V_{ih}h$ and $\sum_{i=0}^{T/h-1}\hat{r}_{ih}h$, respectively, based on the Euler scheme.  

\begin{lemma}\label{lemma4}
Let $h=T/n$, $n\in \mathbb{N}^+$. For any $q\geq-\frac{2k\theta}{\sigma^2}$, we have
\[
\sup_{h}\mathbb{E}\left[\left(\sum_{i=0}^{T/h-1}V_{ih}h\right)^q\right]<\infty.
\]
\end{lemma}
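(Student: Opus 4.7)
The plan is to split into two regimes according to the sign of $q$, with the negative endpoint $q = -2k\theta/\sigma^2$ being the delicate point.

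For $q \geq 0$, I would use the pointwise bound
\[
\sum_{i=0}^{T/h-1} V_{ih} h \leq T \sup_{t \in [0,T]} V_t.
\]
The CIR process $V$ has finite moments of every positive order for its running supremum, as a standard consequence of It$\hat{o}$'s formula applied to a suitable power together with the Burkholder-Davis-Gundy inequality (the mean-reverting drift absorbs the polynomial term). This yields $\sup_h \mathbb{E}\bigl[\bigl(\sum_{i} V_{ih} h\bigr)^q\bigr] \leq T^q \mathbb{E}[\sup_t V_t^q] < \infty$, uniformly in $h$.

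For $q \in [-2k\theta/\sigma^2, 0)$, the key observation is that $x \mapsto x^q$ is convex on $(0,\infty)$. With $n = T/h$, rewriting $\sum V_{ih} h = T \cdot (n^{-1}\sum V_{ih})$ and applying Jensen's inequality gives
\[
\Bigl(\sum_{i=0}^{n-1} V_{ih} h\Bigr)^{q} \leq \frac{T^{q}}{n} \sum_{i=0}^{n-1} V_{ih}^{q}.
\]
Taking expectations bounds the left-hand side by $T^{q} \sup_{t\in[0,T]} \mathbb{E}[V_t^q]$, uniformly in $h$. Using the scaled noncentral chi-squared representation of $V_t$ recalled in Section \ref{Hestonstr}, a direct inspection of the noncentral chi-squared density near the origin (which behaves like $x^{d/2-1}$ with $d=4k\theta/\sigma^2$) confirms that $\mathbb{E}[V_t^q]$ is finite and uniformly bounded on $[0,T]$ whenever $q > -d/2$.

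The main obstacle is the endpoint $q = -d/2 = -2k\theta/\sigma^2$: at this value the individual moments $\mathbb{E}[V_t^{-d/2}]$ are already infinite for $t > 0$, so the naive Jensen bound above breaks down term by term. To recover the endpoint, I would exploit that the quantity of interest is a time average rather than a single $V_{ih}$, by comparing $\sum V_{ih}h$ with the continuous integral $\int_0^T V_s \, ds$. Dufresne's Theorem 4.1(a), already invoked in the proof of Lemma \ref{lemma2}, guarantees that $\mathbb{E}\bigl[\bigl(\int_0^T V_s\,ds\bigr)^{q}\bigr] < \infty$ for every real $q$; combining this with the $L^q$ control $\mathbb{E}\bigl[\bigl|\sum V_{ih}h - \int_0^T V_s\,ds\bigr|^q\bigr] = O(h^q)$ from Lemma \ref{lemma1} (applied to the exact CIR paths, whose coefficients satisfy Assumption \ref{as1}), one can split on the event $\{\sum V_{ih} h \geq \tfrac12 \int_0^T V_s\,ds\}$ and its complement, controlling the complement by Markov's inequality together with negative moments of the integral, and using the trivial lower bound $\sum V_{ih} h \geq V_0 h$, to close the argument.
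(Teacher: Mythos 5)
Your proof is correct, and on the interior of the exponent range it coincides with the paper's: the paper likewise reduces $\mathbb{E}\bigl[\bigl(\sum_i V_{ih}h\bigr)^q\bigr]$ to $\sup_{t\in[0,T]}\mathbb{E}[V_t^q]$ via Jensen's inequality (concavity for $q\in[0,1]$, convexity for $q<0$ or $q>1$), whereas you invoke moments of the running supremum for $q\ge 0$ --- a heavier tool that works equally well. The genuine divergence is at the endpoint $q=-2k\theta/\sigma^2=-d/2$. The paper simply asserts $\sup_{t\in[0,T]}\mathbb{E}[V_t^q]<\infty$ for all $q\ge -2k\theta/\sigma^2$ and applies the same termwise Jensen bound there; but, as you correctly observe, the noncentral chi-squared density behaves like $x^{d/2-1}$ at the origin, so $\mathbb{E}[V_t^{-d/2}]=\infty$ for $t>0$ and the cited moment results really only cover $q>-2k\theta/\sigma^2$, so the paper's argument degenerates exactly at the endpoint. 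Your additional step --- comparing $\sum_i V_{ih}h$ with $\int_0^T V_s\,\mathrm{d}s$, using Dufresne's finiteness of all negative moments of the integral on the event $\{\sum_i V_{ih}h\ge \tfrac12\int_0^T V_s\,\mathrm{d}s\}$, and on the complement combining Markov's inequality (applied with a sufficiently high power of the $O(h^{q})$ error from Lemma \ref{lemma1}, valid for the exact CIR path) with the deterministic bound $\sum_i V_{ih}h\ge V_0h$ --- does close this case: taking the Markov exponent at least $d/2$ makes $(V_0h)^{-d/2}\,\mathbb{P}(\text{complement})$ bounded uniformly in $h$. So your route is strictly more careful than the paper's and in fact repairs its endpoint case; note, though, that the endpoint is never needed downstream, since Theorem \ref{thm2} only uses Lemma \ref{lemma4} with exponents strictly below $2k\theta/\sigma^2$ in absolute value.
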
 
\begin{proof}
It is known that $\sup_{t\in[0,T]}\mathbb{E}(V_{t}^q)<\infty$ for any $q\geq -\frac{2k\theta}{\sigma^2}$ (see Theorem 3.1, Hurd and Kuznetsov \cite{HK} or Dereich, Neuenkirch and Szpruch \cite{DNS}). If $q\in [0,1]$, an application of Jensen's inequality yields
\[
\sup_{h}\mathbb{E}\left[\left(\sum_{i=0}^{T/h-1}V_{ih}h\right)^q\right]\leq \sup_{h}\left(\sum_{i=0}^{T/h-1}\mathbb{E}(V_{ih})h\right)^q\leq\left(T\sup_{t\in[0,T]}\mathbb{E}(V_{t})\right)^q<\infty.
\]
If $q\in [-\frac{2k\theta}{\sigma^2},0)\cup (1,+\infty)$, using Jensen's inequality again, we have
\[
\left(\frac{1}{T}\sum_{i=0}^{T/h-1}V_{ih}h\right)^q\leq \frac{1}{T} \sum_{i=0}^{T/h-1}V_{ih}^{q}h
\]
It follows that
\[
\sup_{h}\mathbb{E}\left[\left(\sum_{i=0}^{T/h-1}V_{ih}h\right)^q\right]\leq c_q\sup_{t\in [0,T]}\mathbb{E}(V_{t}^q)
<\infty.
\]
The proof is complete.
\end{proof}

\begin{remark}
Dufresne \cite{Du} proved that $\mathbb{E}\left[\left(\int_{0}^{T}V_{t}\mathrm{d}t\right)^{q}\right]<\infty $ for any $q\in\mathbb{R}$. Lemma \ref{lemma4} is consistent with Dufresne's result, since $\sum_{i=0}^{T/h-1}V_{ih}h$ converges almost surely to $\int_{0}^{T}V_{t}\mathrm{d}t$. However, it seems difficult to prove $\sup_{h}\mathbb{E}\left[\left(\sum_{i=0}^{T/h-1}V_{ih}h\right)^{q}\right]<\infty$ when $q<-\frac{2k\theta}{\sigma^2}$.

\end{remark}

\begin{thm}\label{thm2}
Let $g$ be the function defined at (\ref{digital}) corresponding to the payoff $P(S_{T})=1_{S_{T}>K}$. Suppose that Assumption \ref{as1} is satisfied and that there exists $p>1$, such that $\mathbb{E}\left(e^{-2p\int_{0}^{T}r_{t}\mathrm{d}t}\right)<\infty$ and $\sup_h\mathbb{E}\left(e^{-2p\sum_{i=0}^{T/h-1}\hat{r}_{ih}h}\right)<\infty$. Let $\frac{2k\theta}{\sigma^2}>1$. Then we have
\[
\mathbb{E}\left[\left(g\left(V_{T},\int_{0}^{T}V_{t}\mathrm{d}t,\int_{0}^{T}r_{t}\mathrm{d}t\right)-g\left(V_{T},\sum_{i=0}^{T/h-1}V_{ih}h,\sum_{i=0}^{T/h-1}\hat{r}_{ih}h\right)\right)^2\right]=O(h^2).
\]
\end{thm}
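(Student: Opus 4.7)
The plan is to Taylor-expand the difference in the second and third arguments (the first, $V_T$, matches on both sides) and reduce everything to moment estimates combining the available negative moments of $\int_0^T V_t\,\mathrm{d}t$ and its Riemann-sum approximation with the exponential integrability of the discount factor. Writing
\[
g(v,x,y) = e^{-y}\Phi(f(v,x,y)),\qquad f(v,x,y) = \frac{A+y+Bx+Cv}{\gamma\sqrt{x}},\quad \gamma:=\sqrt{1-\rho^{2}},
\]
with explicit constants $A,B,C$ depending on $S_{0},K,\rho,k,\sigma,\theta,T,V_{0}$, a one-step mean value theorem in each coordinate yields
\begin{align*}
&g(V_{T},I_{V},I_{r})-g(V_{T},\hat{I}_{V},\hat{I}_{r})\\
&\quad=\partial_{x}g(V_{T},\xi_{1},I_{r})(I_{V}-\hat{I}_{V})+\partial_{y}g(V_{T},\hat{I}_{V},\xi_{2})(I_{r}-\hat{I}_{r}),
\end{align*}
where $\xi_{1}$ lies between $I_{V}$ and $\hat{I}_{V}$, $\xi_{2}$ between $I_{r}$ and $\hat{I}_{r}$, and I use the shorthand $I_{V}=\int_{0}^{T}V_{t}\,\mathrm{d}t$, $\hat{I}_{V}=\sum_{i}V_{ih}h$, $I_{r}=\int_{0}^{T}r_{t}\,\mathrm{d}t$, $\hat{I}_{r}=\sum_{i}\hat{r}_{ih}h$.

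Using the identity $\partial_{x}f=-f/(2x)+B/(\gamma\sqrt{x})$ (obtained from $\alpha:=A+y+Cv=\gamma\sqrt{x}f-Bx$) together with the universal bounds $\phi(t)\leq c$ and $|t\phi(t)|\leq c$, I obtain
\[
|\partial_{x}g|\leq c\,e^{-y}\!\left(x^{-1}+x^{-1/2}\right),\qquad |\partial_{y}g|\leq c\,e^{-y}\!\left(1+x^{-1/2}\right).
\]
After squaring the decomposition and dominating $e^{-2\xi_{2}}\leq e^{-2I_{r}}+e^{-2\hat{I}_{r}}$ and $\xi_{1}^{-\alpha}\leq I_{V}^{-\alpha}+\hat{I}_{V}^{-\alpha}$, the entire estimate reduces to showing that a handful of expectations of the form $\mathbb{E}[e^{-2\tilde{I}_{r}}\tilde{I}_{V}^{-\alpha}(I_{V}-\hat{I}_{V})^{2}]$ and $\mathbb{E}[e^{-2\tilde{I}_{r}}\tilde{I}_{V}^{-\alpha}(I_{r}-\hat{I}_{r})^{2}]$ with $\alpha\in\{1,2\}$ (and $\tilde{I}_{r}\in\{I_{r},\hat{I}_{r}\}$, $\tilde{I}_{V}\in\{I_{V},\hat{I}_{V}\}$) are $O(h^{2})$. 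Each is then handled by a three-fold H\"older inequality isolating the discount factor (uniformly integrable of exponent up to $p$ by hypothesis), the negative power of the $V$-integral (finite of every order for $I_{V}$ by Dufresne's result, and up to order $\nu:=2k\theta/\sigma^{2}$ for $\hat{I}_{V}$ by Lemma \ref{lemma4}), and the squared increment. By Lemma \ref{lemma1} applied to $V$ (with trivially zero discretization error, since $V_{ih}$ is sampled exactly) and to $r$, both $\|I_{V}-\hat{I}_{V}\|_{L^{q}}$ and $\|I_{r}-\hat{I}_{r}\|_{L^{q}}$ are $O(h)$ for every $q\geq 2$, leaving essentially unlimited freedom on the H\"older exponent placed on the increment factor.

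The delicate step will be the $\partial_{x}g$-contribution, whose squared bound carries the singularity $x^{-2}$: a naive three-fold H\"older split forces a constraint of the form $\nu>p/(p-1)$, which is not automatic from $\nu>1$ and $p>1$ alone. The saving observation is the super-polynomial decay of $\phi(f)$ near $x=0$: since $f^{2}\gtrsim\alpha^{2}/(\gamma^{2}x)$ as $x\downarrow 0$, the factor $\phi(f)$ absorbs any polynomial power of $x^{-1}$ on the event $\{|\alpha|\geq\delta\}$, while a direct small-ball estimate on the continuous random variable $\alpha=A+I_{r}+CV_{T}$ controls the complementary event. Once this Gaussian cancellation is inserted into the H\"older argument for the $\partial_{x}g$ piece, the remaining constraints reduce to precisely $\nu>1$ and the existence of some $p>1$ in the hypothesis, and each of the contributing expectations is $O(h^{2})$, completing the proof.
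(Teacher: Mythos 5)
Your overall skeleton matches the paper's: isolate the discount-factor difference (disposed of exactly as in Theorem \ref{thm1}), then control the difference of the $\Phi$-terms by H\"older's inequality combined with Dufresne's negative moments of $I_V=\int_0^T V_t\,\mathrm{d}t$, Lemma \ref{lemma4} for $\hat I_V=\sum_i V_{ih}h$, the independence of $(r,\hat r)$ from $V$, and Lemmas \ref{lemma1}--\ref{lemma2} for the increments. The divergence --- and the gap --- is in how you estimate the $\Phi$-difference. Applying the mean value theorem in the second coordinate produces $\partial_x g(V_T,\xi_1,I_r)$ with $|\partial_x g|\le c\,e^{-y}(\xi_1^{-1}+\xi_1^{-1/2})$, so after squaring and dominating $\xi_1^{-2}\le I_V^{-2}+\hat I_V^{-2}$ you must control $\mathbb{E}\bigl[\hat I_V^{-2p_1}\bigr]$ for some $p_1>1$; by Lemma \ref{lemma4} this requires $2k\theta/\sigma^2>2$, strictly stronger than the hypothesis $2k\theta/\sigma^2>1$. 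You correctly flag this, but the proposed rescue does not close it. Writing $u=|\alpha|/\sqrt{\xi_1}$ (with $\phi$ the standard normal density, as in your bounds), the offending factor is $\phi(f)\,|\alpha|\,\xi_1^{-3/2}\le c\,u e^{-cu^2}\,\xi_1^{-1}$: the Gaussian decay helps only when $u$ is large, and the regime that remains is $|\alpha|\asymp\sqrt{\xi_1}$, not $|\alpha|\le\delta$ for a fixed $\delta$. On that event the bound genuinely attains order $\xi_1^{-1}$ (squared: $\xi_1^{-2}$); even granting a bounded density for $\alpha=A+I_r+CV_T$ --- an assumption not in the theorem --- the probability of $\{|\alpha|\lesssim\sqrt{\xi_1}\}$ is only $O(\sqrt{\xi_1})$, which improves the singularity to $\xi_1^{-3/2}$ and still demands $2k\theta/\sigma^2>3/2$. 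And once the required negative moment of $\hat I_V$ exceeds $2k\theta/\sigma^2$, no H\"older split can succeed, since that moment is then not known to be finite.

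The paper avoids the issue by never differentiating in the $I_V$-slot. It uses only the global Lipschitz bound $|\Phi(a)-\Phi(b)|\le c|a-b|$ together with the exact identity $\frac{1}{\sqrt{\smash[b]{\hat I_V}}}-\frac{1}{\sqrt{I_V}}=\frac{\sqrt{I_V}-\sqrt{\smash[b]{\hat I_V}}}{\sqrt{I_V}\sqrt{\smash[b]{\hat I_V}}}$, which distributes the squared singularity as the product $I_V^{-1}\hat I_V^{-1}$: only the \emph{first} power of the constrained variable $\hat I_V$ appears, handled by H\"older with exponent $p_1$ slightly above $1$ (hence only $2k\theta/\sigma^2>1$ is needed), while all remaining negative powers are loaded onto $I_V$, whose negative moments of every order are finite by Dufresne. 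If you replace the coordinate-wise mean value theorem by this Lipschitz-plus-algebraic decomposition of the argument of $\Phi$ (treating $BI_V/\sqrt{I_V}=B\sqrt{I_V}$ via Lemma \ref{lemma2}, and the term $I_r/\sqrt{I_V}$ via the independence of $(r,\hat r)$ from $V$, exactly as in the paper), the rest of your moment bookkeeping goes through under the stated hypotheses.
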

\begin{proof}
For notational convenience, we denote by
\[
\phi:=\Phi\left(\frac{\ln S_{0}-\ln K+\int_{0}^{T}r_{t}\mathrm{d}t+\left(\frac{\rho k}{\sigma}-\frac{1}{2}\right)\int_{0}^{T}V_{t}\mathrm{d}t+\frac{\rho}{\sigma}\left(V_{T}-V_{0}-k\theta T\right)}{\sqrt{1-\rho^{2}}\sqrt{\int_{0}^{T}V_{t}\mathrm{d}t}}\right)
\]
and its approximation
\[
\hat{\phi}:=\Phi\left(\frac{\ln S_{0}-\ln K+\sum_{i=0}^{T/h-1}\hat{r}_{ih}h+\left(\frac{\rho k}{\sigma}-\frac{1}{2}\right)\sum_{i=0}^{T/h-1}V_{ih}h+\frac{\rho}{\sigma}\left(V_{T}-V_{0}-k\theta T\right)}{\sqrt{1-\rho^{2}}\sqrt{\sum_{i=0}^{T/h-1}V_{ih}h}}\right),
\]
where $\Phi$ is the cumulative distribution function of the standard normal distribution. Since $|\Phi(x)|<1$ for all $x$, we obtain
\begin{align}
&\mathbb{E}\left[\left(g\left(V_{T},\int_{0}^{T}V_{t}\mathrm{d}t,\int_{0}^{T}r_{t}\mathrm{d}t\right)-g\left(V_{T},\sum_{i=0}^{T/h-1}V_{ih}h,\sum_{i=0}^{T/h-1}\hat{r}_{ih}h\right)\right)^2\right]\nonumber\\
&\leq 2\mathbb{E}\left[\left(e^{-\sum_{i=0}^{T/h-1}\hat{r}_{ih}h}-e^{-\int_{0}^{T}r_{t}\mathrm{d}t}\right)^2 \hat{\phi}^2\right]+2\mathbb{E}\left[e^{-2\int_{0}^{T}r_{t}\mathrm{d}t}(\hat{\phi}-\phi)^2\right]\nonumber\\
&\leq 2\mathbb{E}\left[\left(e^{-\sum_{i=0}^{T/h-1}\hat{r}_{ih}h}-e^{-\int_{0}^{T}r_{t}\mathrm{d}t}\right)^2 \right]+2\mathbb{E}\left[e^{-2\int_{0}^{T}r_{t}\mathrm{d}t}(\hat{\phi}-\phi)^2\right].\label{eqthm21}
\end{align}
For the first term, Theorem \ref{thm1} implies that
\begin{equation}
\mathbb{E}\left[\left(e^{-\sum_{i=0}^{T/h-1}\hat{r}_{ih}h}-e^{-\int_{0}^{T}r_{t}\mathrm{d}t}\right)^2 \right]=O(h^2).\label{eq21}
\end{equation}
For the second term, as the normal cumulative distribution function $\Phi$ is Lipschitz continuous, we have
\begin{align}
&\mathbb{E}\left[e^{-2\int_{0}^{T}r_{t}\mathrm{d}t}(\hat{\phi}-\phi)^2\right]\nonumber\\
&\leq c\mathbb{E}\left[e^{-2\int_{0}^{T}r_{t}\mathrm{d}t}\left(\frac{1}{\sqrt{\sum_{i=0}^{T/h-1}V_{ih}h}}-\frac{1}{\sqrt{\int_{0}^{T}V_{t}\mathrm{d}t}}\right)^2\right]\nonumber\\
&\quad+c\mathbb{E}\left[e^{-2\int_{0}^{T}r_{t}\mathrm{d}t}\left(\frac{V_{T}}{\sqrt{\sum_{i=0}^{T/h-1}V_{ih}h}}-\frac{V_{T}}{\sqrt{\int_{0}^{T}V_{t}\mathrm{d}t}}\right)^2\right]\nonumber\\
&\quad +c\mathbb{E}\left[e^{-2\int_{0}^{T}r_{t}\mathrm{d}t}\left(\sqrt{\sum_{i=0}^{T/h-1}V_{ih}h}-\sqrt{\int_{0}^{T}V_{t}\mathrm{d}t}\right)^2\right]\nonumber\\
&\quad+c\mathbb{E}\left[e^{-2\int_{0}^{T}r_{t}\mathrm{d}t}\left(\frac{\sum_{i=0}^{T/h-1}\hat{r}_{ih}h}{\sqrt{\sum_{i=0}^{T/h-1}V_{ih}h}}-\frac{\int_{0}^{T}r_{t}\mathrm{d}t}{\sqrt{\int_{0}^{T}V_{t}\mathrm{d}t}}\right)^2\right].\label{lastterm}
\end{align}
Let's focus on the last term of (\ref{lastterm}). It follows that
\begin{align}
&\left|\frac{\sum_{i=0}^{T/h-1}\hat{r}_{ih}h}{\sqrt{\sum_{i=0}^{T/h-1}V_{ih}h}}-\frac{\int_{0}^{T}r_{t}\mathrm{d}t}{\sqrt{\int_{0}^{T}V_{t}\mathrm{d}t}}\right|\nonumber\\
&\leq \left|\sum_{i=0}^{T/h-1}\hat{r}_{ih}h\right|\left|\frac{1}{\sqrt{\sum_{i=0}^{T/h-1}V_{ih}h}}-\frac{1}{\sqrt{\int_{0}^{T}V_{t}\mathrm{d}t}}\right|+\frac{1}{\sqrt{\int_{0}^{T}V_{t}\mathrm{d}t}}\left|\sum_{i=0}^{T/h-1}\hat{r}_{ih}h-\int_{0}^{T}r_{t}\mathrm{d}t\right|\nonumber\\
&=\left|\sum_{i=0}^{T/h-1}\hat{r}_{ih}h\right|\left|\frac{\sqrt{\int_{0}^{T}V_{t}\mathrm{d}t}-\sqrt{\sum_{i=0}^{T/h-1}V_{ih}h}}{\sqrt{\sum_{i=0}^{T/h-1}V_{ih}h}\sqrt{\int_{0}^{T}V_{t}\mathrm{d}t}}\right|+\frac{1}{\sqrt{\int_{0}^{T}V_{t}\mathrm{d}t}}\left|\sum_{i=0}^{T/h-1}\hat{r}_{ih}h-\int_{0}^{T}r_{t}\mathrm{d}t\right|,\nonumber
\end{align}
Using the above inequality and  the fact that the stochastic processes $r$ and $\hat{r}$ are independent of $V$, we obtain
\begin{align*}
&\mathbb{E}\left[e^{-2\int_{0}^{T}r_{t}\mathrm{d}t}\left(\frac{\sum_{i=0}^{T/h-1}\hat{r}_{ih}h}{\sqrt{\sum_{i=0}^{T/h-1}V_{ih}h}}-\frac{\int_{0}^{T}r_{t}\mathrm{d}t}{\sqrt{\int_{0}^{T}V_{t}\mathrm{d}t}}\right)^2\right]\\
&\leq c\mathbb{E}\left[e^{-2\int_{0}^{T}r_{t}\mathrm{d}t}\left(\sum_{i=0}^{T/h-1}\hat{r}_{ih}h\right)^2\right]\mathbb{E}\left[\left(\frac{\sqrt{\int_{0}^{T}V_{t}\mathrm{d}t}-\sqrt{\sum_{i=0}^{T/h-1}V_{ih}h}}{\sqrt{\sum_{i=0}^{T/h-1}V_{ih}h}\sqrt{\int_{0}^{T}V_{t}\mathrm{d}t}}\right)^2\right]\\
&\quad + c\mathbb{E}\left[e^{-2\int_{0}^{T}r_{t}\mathrm{d}t}\left(\sum_{i=0}^{T/h-1}\hat{r}_{ih}h-\int_{0}^{T}r_{t}\mathrm{d}t\right)^2\right]\mathbb{E}\left[\left(\frac{1}{\int_{0}^{T}V_{t}\mathrm{d}t}\right)\right].
\end{align*}
Recall that 
$\mathbb{E}\left[\left(\frac{1}{\int_{0}^{T}V_{t}\mathrm{d}t}\right)^{q}\right]<\infty$
for any $q\in\mathbb{R}$. Furthermore, by Jensen's inequality, it holds that
\[
\mathbb{E}\left[\left|\sum_{i=0}^{T/h-1}\hat{r}_{ih}h\right|^{q}\right]\leq c_q \mathbb{E}\left(\sum_{i=0}^{T/h-1}|\hat{r}_{ih}|^q h\right)\leq c_q \max_{i=1,..,T/h}\mathbb{E}\left(|\hat{r}_{ih}|^{q}\right)<c_q
\]
for any $q\geq 1$, where $c_q$ is independent of $h$. Therefore, applying H$\ddot{o}$lder's inequality, together with Lemma \ref{lemma4}, we conclude that 
\begin{align*}
&\mathbb{E}\left[e^{-2\int_{0}^{T}r_{t}\mathrm{d}t}\left(\frac{\sum_{i=0}^{T/h-1}\hat{r}_{ih}h}{\sqrt{\sum_{i=0}^{T/h-1}V_{ih}h}}-\frac{\int_{0}^{T}r_{t}\mathrm{d}t}{\sqrt{\int_{0}^{T}V_{t}\mathrm{d}t}}\right)^2\right]\\
&\leq c\left[\mathbb{E}\left|\frac{\sqrt{\int_{0}^{T}V_{t}\mathrm{d}t}-\sqrt{\sum_{i=0}^{T/h-1}V_{ih}h}}{\sqrt{\int_{0}^{T}V_{t}\mathrm{d}t}}\right|^{2q_1}\right]^{1/q_1}\left[\mathbb{E}\left(\frac{1}{\sum_{i=0}^{T/h-1}V_{ih}h}\right)^{p_{1}}\right]^{1/p_1}\\
&\quad \times\left[\mathbb{E}(e^{-2p\int_{0}^{T}r_{t}\mathrm{d}t})\right]^{1/p}\left[\mathbb{E}\left|\sum_{i=0}^{T/h-1}\hat{r}_{ih}h\right|^{2q}\right]^{1/q}\\
&\quad+c\left[\mathbb{E}(e^{-2p\int_{0}^{T}r_{t}\mathrm{d}t})\right]^{1/p}\left[\mathbb{E}\left|\sqrt{\int_{0}^{T}r_{t}\mathrm{d}t}-\sqrt{\sum_{i=0}^{T/h-1}\hat{r}_{ih}h}\right|^{2q}\right]^{1/q}\mathbb{E}\left(\frac{1}{\int_{0}^{T}V_{t}\mathrm{d}t}\right)\\
&\leq c\left[\mathbb{E}\left|\sqrt{\int_{0}^{T}V_{t}\mathrm{d}t}-\sqrt{\sum_{i=0}^{T/h-1}V_{ih}h}\right|^{4q_1}\mathbb{E}\left(\frac{1}{\int_{0}^{T}V_{t}\mathrm{d}t}\right)^{2q_1}\right]^{1/(2q_1)}\left[\mathbb{E}\left(\frac{1}{\sum_{i=0}^{T/h-1}V_{ih}h}\right)^{p_{1}}\right]^{1/p_1}\\
&\quad \times\left[\mathbb{E}(e^{-2p\int_{0}^{T}r_{t}\mathrm{d}t})\right]^{1/p}\left[\mathbb{E}\left|\sum_{i=0}^{T/h-1}\hat{r}_{ih}h\right|^{2q}\right]^{1/q}\\
&\quad+c\left[\mathbb{E}(e^{-2p\int_{0}^{T}r_{t}\mathrm{d}t})\right]^{1/p}\left[\mathbb{E}\left|\sqrt{\int_{0}^{T}r_{t}\mathrm{d}t}-\sqrt{\sum_{i=0}^{T/h-1}\hat{r}_{ih}h}\right|^{2q}\right]^{1/q}\mathbb{E}\left(\frac{1}{\int_{0}^{T}V_{t}\mathrm{d}t}\right)=O(h^2),
\end{align*}
where $p,q,p_1,q_1$ satisfy $\frac{1}{p}+\frac{1}{q}=1$ and $\frac{1}{p_1}+\frac{1}{q_1}=1$. Note that when $\frac{2k\theta}{\sigma^2}> 1$, $\sup_h\mathbb{E}\left[\left(\frac{1}{\sum_{i=0}^{T/h-1}V_{ih}h}\right)^{p_{1}}\right]<\infty$ (i.e., $p_1<\frac{2k\theta}{\sigma^2}$, see Lemma \ref{lemma4}) can be achieved by choosing $p_1$ sufficiently close to $1$. The other terms of (\ref{lastterm}) can be shown to be $O(h^{2})$ analogously, hence it holds that $\mathbb{E}\left[e^{-2\int_{0}^{T}r_{t}\mathrm{d}t}(\hat{\phi}-\phi)^2\right]=O(h^2)$. Combining this estimate and (\ref{eq21}) into (\ref{eqthm21}), we complete the proof. 
\end{proof}

The analysis above is for the digital call option. For the digital put option with payoff $P(S_{T}):=1_{K>S_T}$, analogous convergence results can be easily derived following analogous techniques and proofs. 

\section{Extension}
In this section, we extend our result to the Heston model with stochastic interest rates, where the driving Brownian motion for the interest rate process $(r_t)_{t\geq 0}$ and that for the asset process $(S_t)_{t\geq 0}$ are correlated. The model can be written as
\begin{align*}
\mathrm{d}S_{t}&=r_{t}S_{t}\mathrm{d}t+\sqrt{V_{t}}S_{t}(\rho_1\mathrm{d}W_{t}^{1}+\rho_2\mathrm{d}W_{t}^{3}+\sqrt{1-\rho_1^2-\rho_2^2}\mathrm{d}W_{t}^{2})\\
\mathrm{d}V_{t}&=k(\theta-V_{t})\mathrm{d}t+\sigma\sqrt{V_{t}}\mathrm{d}W_{t}^{1}\\
\mathrm{d}r_{t}&=\mu(t,r_{t})\mathrm{d}t+\phi(t,r_{t})\mathrm{d}W_{t}^{3},
\end{align*}
where $\rho_1, \rho_2\in [-1,1]$, and $(W^i_{t})_{t\geq 0}$, $i=1,2,3$, are mutually independent Brownian motions. 

We have defined a semi-exact log-Euler scheme and analysed the relevant convergence, when $\rho_2=0$. However, the same scheme does not apply directly, when $\rho_2\neq 0$, because if we write $\int_{0}^{T}\sqrt{V_t}\mathrm{d}W_{t}^{3}=\sqrt{\int_{0}^{T}V_t\mathrm{d}}\tilde{N}$, then the standard normal random variable $\tilde{N}$ depends on $(r_{t})_{t\geq 0}$ with an unknown correlation. 

Let $X_t=\ln S_t$, and following the similar steps, we have
\begin{align*}
X_{(i+1)h}&=X_{ih}+\left[\int_{ih}^{(i+1)h}r_{t}\mathrm{d}t+\left(\frac{\rho_1 k}{\sigma}-\frac{1}{2}\right)\int_{ih}^{(i+1)h}V_{t}\mathrm{d}t+\frac{\rho_1}{\sigma}\left(V_{(i+1)h}-V_{ih}-k\theta h\right)\right.\nonumber\\
&\quad\left.+\rho_2\int_{ih}^{(i+1)h}\sqrt{V_{t}}\mathrm{d}W_{t}^{3}+\sqrt{1-\rho_1^{2}-\rho_2^{2}}\int_{ih}^{(i+1)h}\sqrt{V_{t}}\mathrm{d}W_{t}^{2}\right]
\end{align*}
for $i=0,1,..,T/h$. The numerical scheme we propose is of the form
\begin{align*}
\tilde{X}_{(i+1)h}&=\tilde{X}_{ih}+\tilde{r}_{ih}h+\left(\frac{\rho_1 k}{\sigma}-\frac{1}{2}\right)V_{ih}h+\frac{\rho_1}{\sigma}\left(V_{(i+1)h}-V_{ih}-k\theta h\right)\\
&\quad+\rho_2\sqrt{V_{ih}}\Delta W^3+\sqrt{1-\rho_1^{2}-\rho_2^{2}}\sqrt{V_{ih}}\Delta W^{2}
\end{align*}
where $V$ is sampled exactly. 

\begin{thm} \label{thm5.1}
Suppose that Assumptions \ref{as1} and \ref{as2} are satisfied. Suppose that there exists $p>1$, such that $\mathbb{E}\left(e^{-2p\int_{0}^{T}r_{t}\mathrm{d}t}\right)< \infty$ and $\sup_h\mathbb{E}\left(e^{-2p\sum_{i=0}^{T/h-1}\tilde{r}_{ih}h}\right)< \infty$. Then, we have
\[
\mathbb{E}\left[\left(e^{-\int_{0}^{T}r_{t}\mathrm{d}t}P(S_{T})\right)^2\right]<\infty
\]
and
\[
\mathbb{E}\left[\left(e^{-\int_{0}^{T}r_{t}\mathrm{d}t}P(S_{T})-e^{-\sum_{i=0}^{T/h-1}\tilde{r}_{ih}h}P(\tilde{S}_{T}^h)\right)^2\right]=O(h).
\]
\end{thm}
\begin{proof}
The proof is very similar to the proof of Theorem \ref{thm1}. It suffices to prove that
\begin{equation}
\mathbb{E}\left[\left|\ln S_{T}-\ln \tilde{S}_{T}^{h}\right|^{q}\right]=O(h^{q/2}), \quad \forall q\geq 2. \label{eq5.1}
\end{equation}
We obtain
\begin{align}
&\mathbb{E}\left[\left|\ln S_{T}-\ln \tilde{S}_{T}^{h}\right|^{q}\right]\nonumber\\
&\leq c_{q}\mathbb{E}\left[\left|\int_{0}^{T}r_{t}\mathrm{d}t-\sum_{i=0}^{T/h-1} \tilde{r}_{ih}h\right|^{q}\right]+c_{q}\mathbb{E}\left[\left|\int_{0}^{T}V_{t}\mathrm{d}t-\sum_{i=0}^{T/h-1} V_{ih}h\right|^{q}\right]\nonumber\\
&\quad +c_{q}\mathbb{E}\left[\left|\int_{0}^{T}\left(\sqrt{V_t}-\sqrt{V_{\eta(t)}}\right)\mathrm{d}W_t^2\right|^{q}\right]+c_q\mathbb{E}\left[\left|\int_{0}^{T}\left(\sqrt{V_t}-\sqrt{V_{\eta(t)}}\right)\mathrm{d}W_t^3\right|^{q}\right].\label{eq5.2}
\end{align}
where $\eta(t):=\max\{lh:lh\leq t, l=0,1,2,...\}$. The first and second terms of (\ref{eq5.2}) is $O(h^q)$, due to Lemma \ref{lemma1}. The third and fourth terms can be estimated to be $O(h^{q/2})$. Specifically, Corollary 2.14 in Hutzenthaler, Jentzen and Noll \cite{HJN} implies that for any $q>0$, there exists a constant $L$ (for all $s,t\in [0,T]$) such that
\[
\mathbb{E}\left[\left|\sqrt{V_{t}}-\sqrt{V_{s}}\right|^q\right]\leq L|t-s|^{q/2}.
\]
Then, using Burkholder-Davies-Gundy inequality, we have
\begin{align*}
\mathbb{E}\left[\left|\int_{0}^{T}\left(\sqrt{V_t}-\sqrt{V_{\eta(t)}}\right)\mathrm{d}W_t^3\right|^{q}\right]&\leq c_q \mathbb{E}\left[\left|\int_{0}^{T}\left(\sqrt{V_t}-\sqrt{V_{\eta(t)}}\right)^2\mathrm{d}t\right|^{q/2}\right]\\
&\leq c_q\int_{0}^{T}\mathbb{E}\left(\left|\sqrt{V_t}-\sqrt{V_{\eta(t)}}\right|^{q}\right)\mathrm{d}t=O(h^{q/2}).
\end{align*}
Therefore, we have proved (\ref{eq5.1}), and the proof is complete.
\end{proof}
The convergence rate $O(h)$ in Theorem \ref{thm5.1}, corresponding to the rate $1/2$ in the $L^2$ norm, is optimal, because the same scheme applied to the standard Heston model yields the same convergence rate (Zheng \cite{CZ1}). As discussed, this rate may not be sufficiently high for the unbiased estimation. However, the development of a method of higher order is difficult, so we leave it for future research. Nevertheless, our result is still useful, because we can easily combine this scheme with the Multilevel Monte Carlo in Giles\cite{Gi}. 

\section{Applications}
In this section, we apply our results from Section \ref{sect2} to several well-known interest rate models in finance, including the CIR model, the Hull-White model and the Black-Karasinski model. The option payoff $P$ we consider satisfies either Assumption \ref{as2} or is the payoff of a digital option.

\subsection{Heston-CIR model}
The CIR model, which was introduced by Cox, Ingersoll and Ross \cite{CIR}, is represented as
\[
\mathrm{d}r_{t}=\alpha(\beta-r_{t})\mathrm{d}t+\gamma \sqrt{r_{t}}\mathrm{d}W_{t}^{3},
\]
where $\alpha,\beta,\gamma, r_0>0$. It is known that $r_{t}$ follows a scaled noncentral chi-squared distribution given $r_{u},u\in [0,t)$, i.e.,
\[
r_{t}\overset{\mathrm{d}}{=}\frac{\gamma^{2}(1-\mathrm{e}^{-\alpha(t-u)})}{4\alpha}\chi_{d}^{2}\left(\frac{4\alpha\mathrm{e}^{-k(t-u)}}{\sigma^{2}(1-\mathrm{e}^{-\alpha(t-u)})}r_{u}\right),
\]
where $\chi_{d}^{2}(\lambda)$ denotes a noncentral chi-squared random variable with degrees of freedom $d=\frac{4\alpha \beta}{\gamma^{2}}$  and noncentrality parameter $\lambda$ (see Glasserman \cite{G}).
For the exact simulation of $r_{t}$, i.e., $\hat{r}_{t}=r_{t}$, $t\in [0,T]$, we have $\sup_{t\in [0,T]}\mathbb{E}(r_{t}^q)<\infty$ for any $q>-\frac{2\alpha\beta}{\gamma^2}$. Hence, it is easy to verify that Assumption \ref{as1} is satisfied. As $\mathbb{P}(r_{t}\geq 0, \forall t\in [0,T])=1$, the moment condition in Theorem \ref{thm1} is satisfied and it follows from Theorem \ref{thm1} that 
\begin{equation}
\mathbb{E}\left[\left(e^{-\int_{0}^{T}r_{t}\mathrm{d}t}P(S_{T})-e^{-\sum_{i=0}^{T/h-1}\hat{r}_{ih}h}P(\hat{S}_{T}^h)\right)^2\right]=O(h^2).\label{eqtnh}
\end{equation}
Note that Theorem \ref{thm2} has the same assumptions as Theorem \ref{thm1}. Hence, for digital options, equation (\ref{eqtnh}) also holds.

Since the exact simulation of the CIR process can be time-consuming, there are several time-discrete schemes for the CIR process (see Alfonsi \cite{Alf} for discussions). Neuenkirch and Szpruch \cite{NS} showed that the BEM scheme and the drift-implicit Milstein scheme preserve the nonnegativity of the CIR process, i.e., $\mathbb{P}(\hat{r}_{t}\geq 0, \forall t\in [0,T])=1$ and both are strongly convergent with order one when $\frac{2\alpha\beta}{\gamma^2}>3$. Specifically, the BEM scheme can be written as
\[
\hat{x}_{ih}=\hat{x}_{(i-1)h}+\frac{\alpha}{2}\left(\left(\beta-\frac{\gamma^2}{4\alpha}\right)\hat{x}_{ih}^{-1}-\hat{x}_{ih}\right)h+\frac{\gamma}{2}(W_{ih}-W_{(i-1)h}),
\]
where $\hat{r}_{ih}=\hat{x}_{ih}^2$. Proposition 3.1 in Neuenkirch and Szpruch \cite{NS} demonstrated that
\[
\mathbb{E}\left[\max_{i=1,..,T/h}(\hat{r}_{ih}-r_{ih})^p\right]<c_{n}h^p,
\]
if $2\leq p<\frac{4}{3}\frac{\alpha\beta}{\gamma^2}$. The drift-implicit Milstein scheme is
\[
\hat{r}_{ih}=\hat{r}_{(i-1)h}+\alpha(\beta-\hat{r}_{ih})h+\gamma\sqrt{\hat{r}_{(i-1)h}}(W_{ih}-W_{(i-1)h})+\frac{\gamma^2}{4}((W_{ih}-W_{(i-1)h})^2-h).
\]
Lemma 4.1 in Neuenkirch and Szpruch \cite{NS} guaranteed that
\[
\max_{i=1,..,T/h}\mathbb{E}\left|\hat{r}_{ih}-r_{ih}\right|<ch,
\]
if $\frac{\alpha\beta}{\gamma^2}>\frac{3}{2}$. These results indicate that Assumption \ref{as1} might be satisfied; thus, (\ref{eqtnh}) might hold for both the BEM scheme and the drift-implicit Milstein scheme applied to the CIR process.

\subsection{Heston-Hull-White model}
The Hull-White model (Hull and White \cite{HW}) is of the form
\begin{equation}
\mathrm{d}r_{t}=\alpha(\beta(t)-r_{t})\mathrm{d}t+\gamma\mathrm{d}W_{t}^{3},\label{HWhite}
\end{equation}
where $\alpha,\gamma>0$, $r_0\in\mathbb{R}$ and $\beta:[0,T]\rightarrow \mathbb{R}^{+}$ is continuous. Given $r_u,u\in [0,t)$, the interest rate $r_t$ is normally distributed with mean 
\[
e^{-\alpha(t-u)}r_{u}+\alpha\int_{u}^{t}e^{-\alpha(t-s)}\beta(s)\mathrm{d}s
\]
and variance
\[
\frac{\gamma^2}{2\alpha}(1-e^{-2\alpha(t-u)})
\]
(see Glasserman \cite{G}, p109). In practice, it is often the case that $\beta(t)$ has a simple structure so that it is convenient to simulate $r_{t}$ exactly. Let $\hat{r}_{t}=r_{t}$, $t\in [0,T]$. Since it holds that $\sup_{t\in [0,T]}\mathbb{E}[|r_{t}|^q]<\infty$ for any $q\geq 0$, Assumption \ref{as1} is satisfied. Furthermore, we have $\mathbb{E}\left(e^{-2p\int_{0}^{T}r_{t}\mathrm{d}t}\right)< \infty$ for any $p\in \mathbb{R}$ (see Glasserman \cite{G}, p111). The lemma below shows the boundedness of $\sup_h\mathbb{E}\left(e^{-2p\sum_{i=0}^{T/h-1}r_{ih}h}\right)$.

\begin{lemma}
Let $(r_t)_{t\in [0,T]}$ satisfy the Hull-White model (\ref{HWhite}). Let $h=T/n$, $n\in \mathbb{N}^+$. For any $p\in \mathbb{R}$, we have
\[
\sup_h\mathbb{E}\left(e^{-2p\sum_{i=0}^{T/h-1}r_{ih}h}\right)< \infty.
\]
\end{lemma}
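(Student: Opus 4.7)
The plan is to exploit the Gaussian structure of the Hull-White process together with Jensen's inequality. Since $\mathrm{d}r_t = \alpha(\beta(t)-r_t)\mathrm{d}t + \gamma \mathrm{d}W_t^3$ is a linear SDE driven by Brownian motion, the process $(r_t)_{t\in[0,T]}$ is Gaussian, and in particular each $r_{ih}$ is a normal random variable whose mean
\[
m_{ih} = e^{-\alpha ih}r_0 + \alpha\int_0^{ih} e^{-\alpha(ih-s)}\beta(s)\mathrm{d}s
\]
and variance $v_{ih} = \frac{\gamma^2}{2\alpha}(1-e^{-2\alpha ih})$ are uniformly bounded in $ih\in[0,T]$ (the mean because $\beta$ is continuous on the compact interval $[0,T]$, the variance by $\gamma^2/(2\alpha)$).

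The core step is to apply Jensen's inequality to the convex map $x\mapsto e^{-2px}$ with the probability weights $\lambda_i := h/T$, which satisfy $\sum_{i=0}^{T/h-1}\lambda_i = 1$. Writing $\sum_{i=0}^{T/h-1} r_{ih}h = \sum_i \lambda_i (T r_{ih})$, convexity yields
\[
\exp\!\left(-2p\sum_{i=0}^{T/h-1} r_{ih}h\right) \le \frac{h}{T}\sum_{i=0}^{T/h-1} \exp(-2pT r_{ih}).
\]
Taking expectations and using the Gaussian moment generating function,
\[
\mathbb{E}\!\left[e^{-2pT r_{ih}}\right] = \exp\!\left(-2pT\,m_{ih} + 2p^2 T^2 v_{ih}\right) \le C_p,
\]
where $C_p$ depends only on $p$, $\alpha$, $\gamma$, $r_0$, $T$ and $\sup_{s\in[0,T]}|\beta(s)|$, but not on $h$ or $i$.

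Putting these together gives
\[
\mathbb{E}\!\left[e^{-2p\sum_{i=0}^{T/h-1} r_{ih}h}\right] \le \frac{h}{T}\sum_{i=0}^{T/h-1} C_p = C_p,
\]
which is the required uniform bound. I do not expect any serious obstacle: the only small point to verify carefully is the uniform bound on $m_{ih}$, which follows from continuity of $\beta$ on the compact interval $[0,T]$, and the fact that Jensen's inequality holds for both positive and negative $p$ because $e^{-2p\,\cdot}$ is convex in either case.
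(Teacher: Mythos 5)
Your proof is correct, and it takes a genuinely different and substantially simpler route than the paper. The paper proceeds by a backward recursion on conditional expectations: it uses the conditionally Gaussian law of $r_{ih}$ given $r_{(i-1)h}$, peels off one time step at a time, and tracks the accumulated coefficients $h_i=h+e^{-\alpha h}h_{i-1}\leq ih$ until only a deterministic factor in $r_0$ remains. You instead observe that the marginal law of each $r_{ih}$ is Gaussian with mean and variance uniformly bounded over $[0,T]$, and reduce the problem to these marginals via Jensen's inequality applied to the convex map $x\mapsto e^{-2px}$ with weights $h/T$; the Gaussian moment generating function then gives a bound $C_p$ independent of $i$ and $h$. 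The two key points you flag --- uniform boundedness of $m_{ih}$ from continuity of $\beta$ on the compact $[0,T]$, and validity of Jensen for either sign of $p$ --- are exactly the points that need checking, and both hold. What the paper's recursion buys is a bound that respects the joint (Markov) structure of the discretized path and stays close to the exact MGF of the Riemann sum; what your argument buys is brevity and robustness, since it needs nothing beyond marginal Gaussianity and costs only a harmless inflation of the exponent from $hr_{ih}$ to $Tr_{ih}$, which is irrelevant when only finiteness is claimed. Your device is in fact the same one the paper itself uses in Lemma 4.4 to control moments of $\sum_i V_{ih}h$, so it fits naturally into the paper's toolkit.
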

\begin{proof}
It is easy to show that there exists a constant $c$, which is independent of $i$ and $h$, such that 
\[
\max\left(\left|2p\alpha\int_{(i-1)h}^{ih}e^{-\alpha(ih-s)}\beta(s)\mathrm{d}s\right|, \left|\frac{p^2\gamma^2}{\alpha}(1-e^{-2\alpha h})\right|\right)<ch.
\]
Moreover, it is known that if $X$ is normally distributed with mean $\mu$ and variance $\sigma^2$, then $\mathbb{E}(e^{X})=e^{\mu+\frac{1}{2}\sigma^2}$. Let
\begin{align*}
h_1&:=h\leq h\\
h_2&:=h+e^{-\alpha h}h_1\leq 2h\\
...\\
h_i&:=h+e^{-\alpha h}h_{i-1}\leq ih\\
...\\
h_{n}&:=h+e^{-\alpha h}h_{n-1}\leq T
\end{align*}
where $n=T/h$ and $\alpha>0$. Then, we have
\begin{align*}
&\ln\mathbb{E}(e^{-2ph_i r_{(n-i)h}}|r_{(n-i-1)h})\\
&\leq -2ph_i e^{-\alpha h}\cdot r_{(n-i-1)h}+ch\cdot h_i+ch\cdot h_i^2\\
&\leq -2ph_i e^{-\alpha h}\cdot r_{(n-i-1)h}+cih^2+ci^2 h^3
\end{align*}
for any $i=1,2,...,n-1$. Therefore, based on conditional expectations, we can calculate recursively that
\begin{align*}
&\mathbb{E}\left(e^{-2p\sum_{i=0}^{T/h-1}r_{ih}h}\right)\\
&\leq\mathbb{E}\left(e^{-2p\sum_{i=0}^{T/h-2}r_{ih}h}\cdot e^{-2ph_1 e^{-\alpha h}r_{(T/h-2)h}}\right)e^{ch^2+ch^3}\\
&\leq \mathbb{E}\left(e^{-2p\sum_{i=0}^{T/h-3}r_{ih}h}\cdot e^{-2ph_2 e^{-\alpha h}r_{(T/h-3)h}}\right)e^{c\sum_{i=1}^{2}ih^2+c\sum_{i=1}^{2}i^2 h^3}\\
&\leq ...\\
&\leq \left(e^{-2pr_{0}h}\cdot e^{-2ph_{T/h-1} e^{-\alpha h}r_0}\right) e^{c\sum_{i=1}^{T/h-1}ih^2+c\sum_{i=1}^{T/h-1}i^2 h^3}\\
&= e^{-2pr_{0}h_{T/h}} e^{c\sum_{i=1}^{T/h-1}ih^2+c\sum_{i=1}^{T/h-1}i^2 h^3}\leq c_p,
\end{align*}
where $c_p$ is independent of $h$. This means that $\sup_h\mathbb{E}\left(e^{-2p\sum_{i=0}^{T/h-1}r_{ih}h}\right)$ is also bounded and the proof is complete.
\end{proof}
Then, all assumptions in Theorems \ref{thm1} or \ref{thm2} are satisfied, and we conclude that 
\[
\mathbb{E}\left[\left(e^{-\int_{0}^{T}r_{t}\mathrm{d}t}P(S_{T})-e^{-\sum_{i=0}^{T/h-1}\hat{r}_{ih}h}P(\hat{S}_{T}^h)\right)^2\right]=O(h^2).
\] 

\subsection{Heston-Black-Karasinski model}
The Black-Karasinski model (Black and Karasinski \cite{BK}) can be written as
\[
\mathrm{d}\ln r_{t}=(\beta(t)-\alpha\ln r_{t})\mathrm{d}t+\gamma\mathrm{d}W_{t}^{3},
\]
where $\alpha,\gamma, r_0>0$ and $\beta:[0,T]\rightarrow \mathbb{R}^{+}$ is continuous. It follows from It$\hat{o}$'s formula that
\[
\mathrm{d} r_{t}=r_{t}\left(\beta(t)+\frac{\gamma^2}{2}-\alpha\ln r_{t}\right)\mathrm{d}t+\gamma r_{t}\mathrm{d}W_{t}^{3}.
\]
Given $r_u,u\in [0,t)$, the random variable $r_t$ has a lognormal distribution (see Brigo and Mercurio \cite{BM}); hence, $r_{t}$ is usually simulated exactly. Specifically, given $r_u,u\in [0,t)$, the logarithm of the interest rate $\ln r_t$ is normally distributed with mean 
\[
e^{-\alpha(t-u)}\ln r_{u}+\int_{u}^{t}e^{-\alpha(t-s)}\beta(s)\mathrm{d}s
\]
and variance
\[
\frac{\gamma^2}{2\alpha}(1-e^{-2\alpha(t-u)}).
\]
Let $\hat{r}_{t}=r_{t}$, $t\in [0,T]$. Since each moment of a lognormal random variable is finite, we have $\sup_{t\in [0,T]}\mathbb{E}(r_{t}^{q})<\infty$ for any $q\in \mathbb{R}$. It holds from the continuity of $\beta$ and the Cauchy-Schwarz inequality that
\begin{align*}
&\sup_{t\in [0,T]}\mathbb{E}\left[\left|r_{t}\left(\beta(t)+\frac{\gamma^2}{2}-\alpha\ln r_{t}\right)\right|^{q}\right]\\
&\leq c_{q}\sup_{t\in [0,T]}\mathbb{E}(r_{t}^{q})+c_{q}\sup_{t\in [0,T]}\mathbb{E}\left[\left|r_{t}\ln r_{t}\right|^{q}\right]\\
&\leq c_{q}\sup_{t\in [0,T]}\mathbb{E}(r_{t}^{q})+c_{q}\sqrt{\sup_{t\in [0,T]}\mathbb{E}(r_{t}^{2q})\cdot\sup_{t\in [0,T]}\mathbb{E}\left[\left|\ln r_{t}\right|^{2q}\right]}<\infty,
\end{align*}
for any $q\geq 1$. Thus, Assumption \ref{as1} is satisfied. As $r_{t}$ is nonnegative, the moment condition is automatically satisfied, and we obtain from Theorems \ref{thm1} or \ref{thm2} that
\[
\mathbb{E}\left[\left(e^{-\int_{0}^{T}r_{t}\mathrm{d}t}P(S_{T})-e^{-\sum_{i=0}^{T/h-1}\hat{r}_{ih}h}P(\hat{S}_{T}^h)\right)^2\right]=O(h^2).
\]

\section{Numerical results}
In this section, we conduct numerical experiments to verify the convergence rate derived in Section \ref{sect2} and then evaluate the efficiency of the log-Euler scheme we develop combined with Rhee and Glynn's unbiased estimators. 

We consider the Heston model with three different interest rate models: the CIR model, the Hull-White model and the Black-Karasinski model. For the CIR model, we focus on two methods to simulate the path: the exact simulation method and the BEM scheme from Neuenkirch and Szpruch \cite{NS}. For the Hull-White model and the Black-Karasinski model, we simulate the paths exactly. We evaluate the convergence rate of the error
\[
Err(h):=\mathbb{E}\left[\left(e^{-\int_{0}^{T}r_{t}\mathrm{d}t}P(S_{T})-e^{-\sum_{i=0}^{T/h-1}\hat{r}_{ih}h}P(\hat{S}_{T}^h)\right)^2\right],
\]
where $P(S_{T})$ is the payoff of an option and $h$ is the step size. Three types of options are considered:
\begin{itemize}
\item European put option $P(S_{T}):=\max(K-S_{T},0)$; 
\item European call option $P(S_{T}):=\max(S_{T}-K,0)$;
\item Digital call option $P(S_{T}):=1_{S_{T}>K}$.
\end{itemize}
The model parameters are available in Table \ref{table1} and we set $T=1$ and $S_{0}=K=1$ for all cases. These options are interesting for the reason below: a European put option has a bounded and continuous payoff, a European call options has an unbounded payoff and a digital option has a discontinuous payoff. All experiments are performed in Matlab. 

Note that the payoff of a European put option satisfies Assumption \ref{as2}, and according to Theorem \ref{thm1}, the theoretical convergence rate should be $2$. For the digital call option, Theorem \ref{thm2} guarantees that the rate is also $2$ if the approach of conditional expectations is used. The European call option is out of the scope of our analysis. 

\begin{table} 
\begin{center}
\begin{tabular}{ c | c | c | c | c | c | c | c | c | c | c }
\hline 
& $k$ & $\theta$ & $\sigma$ & $\rho$ & $\alpha$ & $\beta$ & $\gamma$ & $r_{0}$ & $V_{0}$ & $S_{0}$\tabularnewline
\hline 
CIR-exact & $2.8$ & $0.05$ & $0.25$ & $0.5$ & $1.2$ & $0.06$ & $0.25$ & $0.05$ & $0.04$ & $1$\tabularnewline
CIR-BEM  & $3$ & $0.04$ & $0.25$ & $0.5$ & $3.5$ & $0.06$ & $0.25$ & $0.05$ & $0.04$ & $1$\tabularnewline
HW &  $2.8$ & $0.05$ & $0.25$ & $0.5$ & $1.2$ & $0.06$ & $0.5$ & $0.05$ & $0.04$ & $1$\tabularnewline
BK &  $2.8$ & $0.05$ & $0.25$ & $0.5$ & $1.2$ & $0.06$ & $0.25$ & $0.05$ & $0.04$ & $1$\tabularnewline
\hline 
\end{tabular}
\par\end{center}
\caption{Parameters of the Heston model with stochastic interest rates, where CIR-exact, CIR-BEM, HW and BK represent the CIR model with exact simulation, the CIR model simulated through the BEM method, the Hull-White model and the Black-Karasinski model, respectively.} \label{table1}
\end{table}

Figure \ref{Fig} plots $\log_{2}(Err(h))$ against $-\log_{2}(h)$, with $h=2^{-n}$, $n=0,1,2,...,7$. Here, the exact values of  $S_{T}$ and $\int_{0}^{T}r_{t}\mathrm{d}t$ are approximated by using the Euler scheme in Section \ref{Hestonstr} based on a very small step size $2^{-10}$. For the BEM method, the exact path of $(r_{t})_{t\in [0,T]}$ needs an additional approximation also with step size $2^{-10}$, which shares the same Brownian motion path with the corresponding $(\hat{r}_{ih})_{i=1,2,...,T/h}$. To estimate $Err(h)$, the number of Monte Carlo samples for each quantity in each model is at least $0.5$ million, so that the standard deviation of the estimator of $Err(h)$ is typically less than $1\%$ of the estimated value of $Err(h)$. As illustrated in Figure \ref{Fig}, the convergence rate in all cases is $2$, for all models and all option payoffs that we consider, which is consistent with the theoretical convergence rate. Furthermore, although the European call option does not fall into the scope of our analysis, it still has a desired convergence rate. This implies that our convergence result might be extended for more types of options.

\begin{figure}
\centering
\includegraphics[scale=0.5]{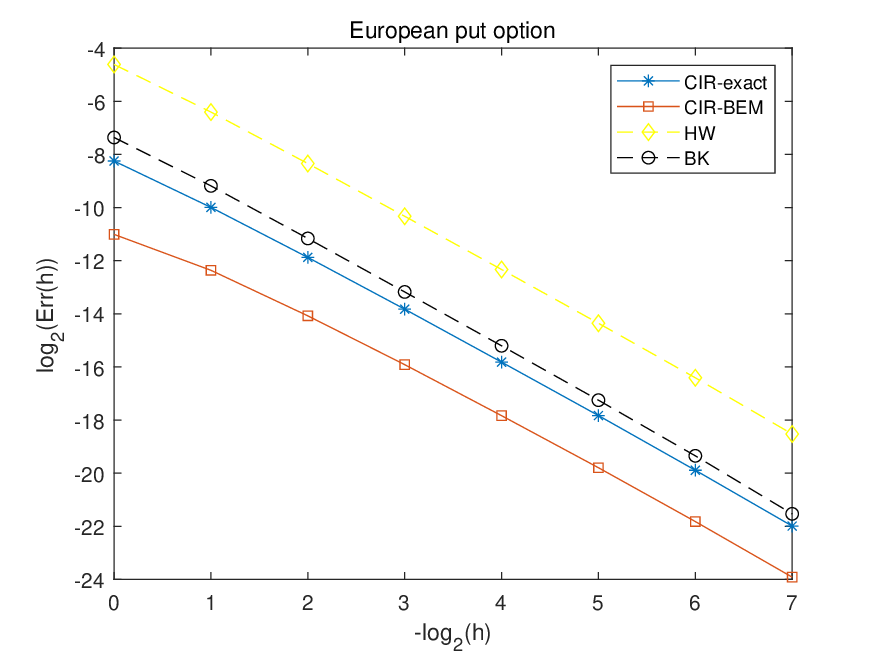}
\includegraphics[scale=0.5]{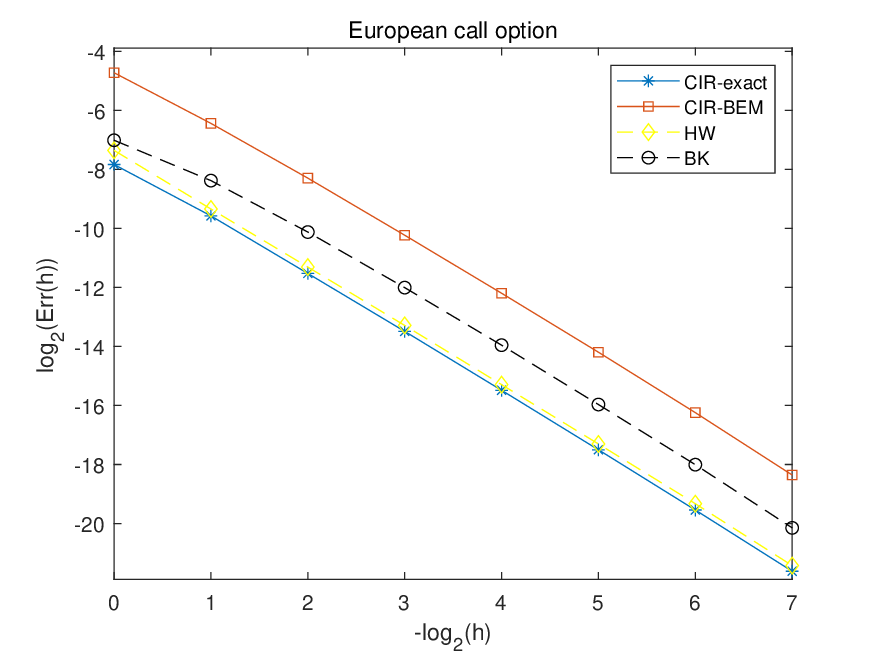}
\includegraphics[scale=0.5]{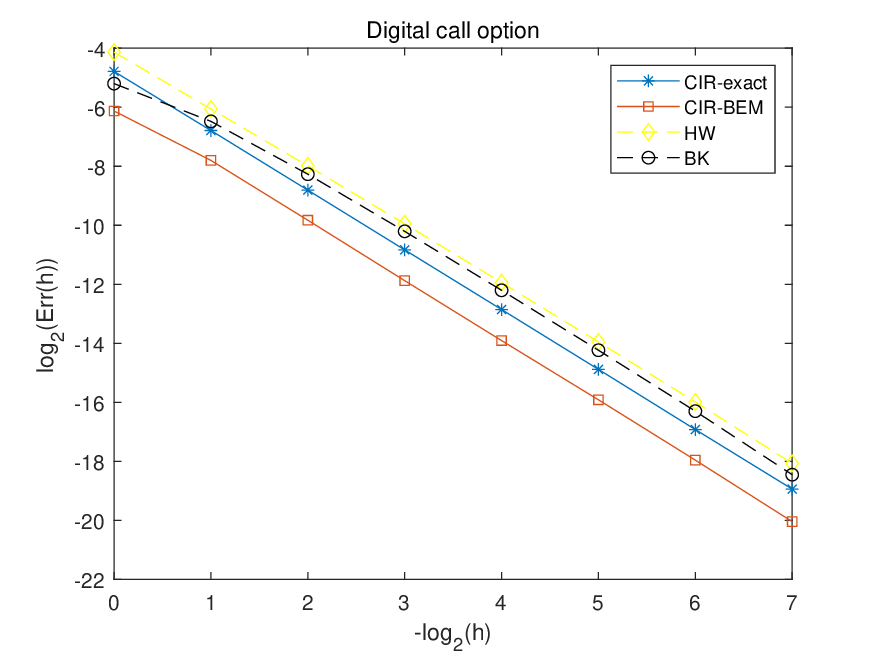}
\caption{Convergence rate for the Heston model with stochastic interest rates. The model parameters are from Table \ref{table1}. }
\label{Fig}
\end{figure}

Next, we incorporate the semi-exact log-Euler scheme into Rhee and Glynn's unbiased estimators. As discussed in Section \ref{Sect3}, the implementation of  unbiased estimation requires setting a distribution of $\mathcal{N}$. For the coupled sum estimator $Z$, we simply take $\mathbb{P}(\mathcal{N}\geq n)=2^{-3n/2}$, $n\in \mathbb{N}$, such that (\ref{var}) and (\ref{ct}) are finite. Hence, $Z$ is unbiased with a finite variance and finite computational time. Table \ref{table2} reports the root mean square error (RMSE) and the computational time (in seconds, for European put options) of $Z$ based on $1$ million samples. For some applications, either the variance or the computational time of $Z$ can be infinite; see Zheng, Blanchet and Glynn \cite{ZBG}. We see from Table \ref{table2} that all these quantities are finite, which again coincides with the theory. In particular, the computational time for all three options is finite. Therefore, the log-Euler scheme we develop is well-suited to the framework of Rhee and Glynn's unbiased estimators. Moreover, compared with the result from Figure \ref{Fig}, we observe that a method for an interest rate model with a large RMSE of $Z$ tends to have a large $Err(h)$. For example, for the European call option, CIR-BEM has the largest RMSE and $Err(h)$, then follows by BK and HW, and CIR-exact has the smallest errors. On the other hand, the digital option generally has a larger RMSE compared with the European call or put options. This may be because that the payoff function of the digital option is discontinuous at $S_T=K$, and $|P(\hat{S}^h_T)-P(S_T)|=1$ can occur when $S_T\approx K$. 

For the single term estimator $\tilde{Z}$, the result is illustrated in Table \ref{table3}, where we set the distribution of $\mathcal{N}$ to be 
\[
p_n=(1-2^{-3/2})2^{-3n/2}.
\]
It is optimal that $p_n=O(2^{-3n/2})$ (Zheng, Pan and Wang \cite{ZPW}) and $p_n$ should satisfy $\sum_{n=0}^{\infty}p_n=1$. As we see from Table \ref{table3}, the RMSE and computational time for all interest rate models and all types of options are finite. Furthermore, the digital option generally has the largest RMSE, followed by the European call option and the European put option, which is analogous to the result from Table \ref{table2}. In addition, the efficiency of the single term estimator seems similar to that of the coupled sum estimator. However, the former has a simpler structure, and it may be more favorable in practical applications. 

\begin{table} 
\begin{center}
\begin{tabular}{ c | c | c | c | c}
\hline 
& RMSE-put & RMSE-call & RMSE-digital& Comput time  \tabularnewline
\hline 
CIR-exact & $1.21\times 10^{-4}$ & $3.39\times 10^{-4}$ & $4.22\times 10^{-4}$ & $8.78$ \tabularnewline
CIR-BEM  & $1.96\times 10^{-4}$  & $0.0014$ & $4.45\times 10^{-4}$& $8.37$ \tabularnewline
HW &  $4.16\times 10^{-4}$  & $4.57\times 10^{-4}$ & $6.32\times 10^{-4}$ & $5.72$ \tabularnewline
BK &  $1.78\times 10^{-4}$  & $5.31\times 10^{-4}$ & $8.32\times 10^{-4}$ & $6.17$ \tabularnewline
\hline 
\end{tabular}
\par\end{center}
\caption{The RMSE and computational time (in seconds) of the coupled sum estimator $Z$ based on $10^6$ samples. The model parameters are from Table \ref{table1}.} \label{table2}
\end{table}

\begin{table} 
\begin{center}
\begin{tabular}{ c | c | c | c | c}
\hline 
& RMSE-put & RMSE-call & RMSE-digital& Comput time  \tabularnewline
\hline 
CIR-exact & $1.63\times 10^{-4}$ & $3.57\times 10^{-4}$ & $4.06\times 10^{-4}$ & $6.74$ \tabularnewline
CIR-BEM  & $1.48\times 10^{-4}$  & $3.50\times 10^{-4}$ & $3.38\times 10^{-4}$ & $7.84$ \tabularnewline
HW &  $4.15\times 10^{-4}$  & $4.76\times 10^{-4}$ & $7.26\times 10^{-4}$ & $5.61$ \tabularnewline
BK &  $2.13\times 10^{-4}$  & $4.32\times 10^{-4}$ & $4.57\times 10^{-4}$ & $5.99$ \tabularnewline
\hline 
\end{tabular}
\par\end{center}
\caption{The RMSE and computational time (in seconds) of the single term estimator $\tilde{Z}$ based on $10^6$ samples. The model parameters are from Table \ref{table1}.} \label{table3}
\end{table}

\section{Conclusion}

In this article, we develop a semi-exact log-Euler scheme for the Heston model with stochastic interest rates and we analyse the relevant convergence rate in the $L^2$ norm. The SDEs of the Heston model with stochastic interest rates can be divided into two components: the Heston component and the interest rate component. Under mild assumptions on the interest rate component, we show that the convergence rate in the $L^2$ norm is one, which enables us to easily incorporate the log-Euler scheme into Rhee and Glynn's unbiased estimators. In our analysis, we consider two types of payoffs of options: the bounded and Lipschitz payoff and the payoff of a digital option. Furthermore, we demonstrate that the log-Euler scheme and the convergence analysis apply to a large class of interest rate models, including the well-known CIR model, Hull-White model and Black-Karasinski model.

There are two directions of extensions that might be interesting: the log-Euler scheme we consider is based on the assumption that the driven Brownian motion $W^3$ for the interest rate model is independent of the driven Brownian motions $W^1$ and $W^2$ for the SDEs of $S$ and $V$. One direction is to extend the scheme to the case without this assumption, i.e., the full constant correlation case, and analyse the convergence rate. The other direction is to extend the payoff $P$ to more complicated cases, such as those in Cozma, Mariapragassam and Reisinger \cite{CMR}.

\section*{Acknowledgements}
This research is supported by National Natural Science Foundation of China (No.11801504).


\begin{thebibliography}{20} \small

\bibitem{Alf} Alfonsi, A. (2005). On the discretization schemes for the CIR (and Bessel squared) processes. Monte Carlo Methods and Applications, 11(4), 355-384.

\bibitem{BCC} Bakshi, S., Cao, C., Chen, Z. (2000). Pricing and hedging long-term options. Journal of
Econometrics 94, 2003-2049.

\bibitem{BK} Black, F and Karasinski, P (1991). Bond and option pricing when short rates are lognormal, Financial Analysts Journal, 52-59. 

\bibitem{BM} Brigo, D and Mercurio, F (2006). Interest rate models--theory and practice: with smile, inflation and credit, Springer Verlag.

\bibitem{CIR} Cox, J. Ingersoll, J. and Ross, S (1985). A theory of term structure of interest rates. Econometrica, 53(2), 385-407. 

\bibitem{CMR}Cozma, A., Mariapragassam, M and Reisinger, C. (2018). Convergence of an Euler scheme for a
hybrid stochastic-local volatility model with stochastic rates in foreign exchange markets,
SIAM Journal on Financial Mathematics, 9, 127-170.

\bibitem{Cui} Cui Z, Lee C, Zhu L and Zhu Y (2021). On the optimal design of the randomized unbiased Monte Carlo estimators. Operations Research Letters. 49(4), 477-484.

\bibitem{DNS} Dereich, S. Neuenkirch, A. and Szpruch, L (2012). An Euler-type method for the strong approximation of the Cox-Ingersoll-Ross process. Proceedings of the Royal Society A, 468, 1105-1115.

\bibitem{Du} Dufresne, D (2001). The integrated square-root process. Working Paper, University of Montreal. https://minerva-access.unimelb.edu.au/handle/11343/33693.

\bibitem{Gi} Giles, M (2008). Multilevel Monte Carlo path simulation. Operations Research, 56(3), 607-617. 

\bibitem{Gi2} Giles, M (2008). Improved multilevel Monte Carlo convergence using the Milstein scheme. Monte Carlo and Quasi-Monte Carlo Methods, 343-358.

\bibitem{G} Glasserman, P (2003). Monte Carlo Methods in Financial Engineering. Springer Sciences and Business media, New York.

\bibitem{GO} Grzelak, L.A and Oosterlee, C.W (2011). On the Heston model with stochastic interest rates. SIAM Journal on Financial Mathematics, 2(1), 255-286. 

\bibitem{He} Heston, S (1993). A closed-form solution for options with stochastic volatility with applications to bond and currency options. Review of Financial Studies, 6(2), 327 - 343.

\bibitem{HW} Hull, J.C and White, A (1990). Pricing interest rate derivative securities. Review of Financial Studies, 3(4), 573-592.

\bibitem{HK} Hurd, T and Kuznetsov, A (2008). Explicit formulas for Laplace transforms of stochastic integrals. Markov Processes and Related Fields, 14, 277-290.

\bibitem{HJN} Hutzenthaler, M., Jentzen, A and Noll, M .: Strong convergence rates and temporal regularity for Cox-Ingersoll-Ross processes and Bessel processes with accessible boundaries . Preprint. https://arxiv.org/abs/1403.6385 (2014)

\bibitem{KP} Kloeden, P and Platen, E (1999). Numerical Solution of Stochastic Differential Equations, 3rd edition, Springer Verlag, New York.

\bibitem{MN} Mickel, A. and Neuenkirch, A (2021). The weak convergence rate of two semi-exact discretization schemes for the Heston model. Risks, 9(1), 23.

\bibitem{NS} Neuenkirch, A. and Szpruch, L (2014). First order strong approximations of scalar
SDEs with values in a domain. Numerische Mathematik, 128, 103-136.

\bibitem{Pr} Protter, P (2005). Stochastic Integration and Differential Equations, 2nd edition, Springer. 

\bibitem{RG} Rhee, C-H and Glynn, P.W (2015).  Unbiased estimation with square root convergence for SDE models. Operations Research, 63(5), 1026-1043.

\bibitem{VHP} Van Haastrecht, A and Pelsser, A (2011). Generic pricing of FX, inflation and stock options under stochastic interest rates and stochastic volatility, Quantitative Finance, 11, 665-691.

\bibitem{CZ} Zheng, C (2017). Weak convergence rate of a time-discrete scheme for the Heston stochastic volatility model. SIAM Journal on Numerical Analysis, 55(3), 1243-1263.

\bibitem{CZ1} Zheng, C (2023). Multilevel Monte Carlo simulation for the Heston stochastic volatility model. Advances in Computational Mathematics, 49(6), 81.

\bibitem{ZPW} Zheng, C. Pan, J and Wang, Q (2025). Optimal distributions for randomized unbiased estimators with an infinite horizon and an adaptive algorithm. IMA Journal of Numerical Analysis, draf017.

\bibitem{ZBG} Zheng, Z., Blanchet, J. and Glynn, P (2018). Rates of convergence and CLTs for subcanonical debiased MLMC. In: Owen, A., Glynn, P. (eds) Monte Carlo and Quasi-Monte Carlo Methods in Scientific Computing 2016.

\end{thebibliography}
\end{document}